\documentclass[12pt, draftclsnofoot, onecolumn]{IEEEtran}
\usepackage{amsthm}
\usepackage{amsmath}
\usepackage{algorithmic}
\usepackage[ruled]{algorithm2e}
\usepackage{mathrsfs}
\usepackage{cite}
\usepackage{bm,epsfig,amsthm,url}
\usepackage{indentfirst}
\usepackage{amssymb}
\usepackage{epstopdf}
\usepackage[caption=false,font=footnotesize]{subfig}
\usepackage{xcolor}
\usepackage{mathtools}

\theoremstyle{definition}

\newtheorem{proposition}{Proposition}

\newtheorem{remark}{Remark}

\DeclareMathOperator*{\argmin}{arg\,min}
\setlength{\abovecaptionskip}{-0.15pt}
\setlength{\belowcaptionskip}{-0.15pt}
\begin{document}
	
\title{UAV Aided Over-the-Air Computation}

\author{ Min~Fu,~\IEEEmembership{Student Member,~IEEE},~Yong~Zhou,~\IEEEmembership{Member,~IEEE},  Yuanming~Shi,~\IEEEmembership{Senior Member,~IEEE}, Wei~Chen,~\IEEEmembership{Senior Member,~IEEE}, and~Rui~Zhang,~\IEEEmembership{Fellow,~IEEE}%
	
	\thanks{M. Fu, Y. Zhou,  and Y. Shi are with School of Information Science and Technology, ShanghaiTech University, Shanghai 201210, China (e-mail: \{fumin, zhouyong, shiym\}@shanghaitech.edu.cn).} %
	\thanks{W. Chen is Department of Electronic Engineering and Beijing National Research Center for Information Science and Technology, Tsinghua University, Beijing 100084, China
	    (e-mail: wchen@tsinghua.edu.cn).}
    \thanks{ R. Zhang is with the Department of Electrical and Computer Engineering, National University of Singapore, Singapore 117583 (e-mail: elezhang@nus.edu.sg). }
  
	\thanks{This article has been presented in part at the \textit{IEEE Int.  Conf.  Commun.}, Montreal, Canada, Jun. 2021\cite{Fu2021UAV}.}
}

\maketitle
\setlength\abovedisplayskip{2pt}
\setlength\belowdisplayskip{2pt}
\vspace{-11mm}
\begin{abstract}
Over-the-air computation (AirComp) seamlessly integrates communication and computation by exploiting the waveform superposition property of multiple-access channels.  
Different from the existing works that focus on transceiver design of AirComp over static networks, this paper considers an unmanned aerial vehicle (UAV) aided AirComp system, where the UAV  as a flying base station aggregates data from mobile sensors. 
The trajectory design of the UAV provides an additional degree of freedom to improve the performance of AirComp.  
Our goal is to minimize the time-averaged mean-squared error (MSE) of AirComp by jointly optimizing the UAV trajectory, receive normalizing factors, and sensors' transmit power.	
To this end, we first propose a novel and equivalent problem transformation by introducing intermediate variables.  
This reformulation leads to a convex subproblem when fixing any other two blocks of variables, thereby enabling efficient algorithm design based on the principle of block coordinate descent and alternating direction method of multipliers (ADMM) techniques.
In particular, we derive the optimal closed-form solutions for normalizing factors and intermediate variables optimization subproblems.
We also recast the convex trajectory design subproblem into an ADMM form and obtain the closed-form expressions for each variable updating.
Simulation results show that the proposed algorithm achieves a smaller time-averaged MSE  while reducing the simulation time by orders of magnitude compared to state-of-the-art algorithms.
\end{abstract}
\vspace{-2mm}
\begin{IEEEkeywords}
Over-the-air computation,  time-averaged MSE minimization,  joint UAV trajectory and transceiver design, ADMM.
\end{IEEEkeywords}
\section{Introduction}\label{introduction}
The availability of massive sensory datasets and high-performance computing  platforms \cite{Savazzi2021FLMag} shall make connected intelligence a dominant feature of 6G wireless networks. 
Integrated sensing, communication, and computation is therefore required to enable a plethora of exciting data-intensive applications, including the internet of everything, tactile internet,  sustainable cities, and e-health.
Over-the-air computation (AirComp) \cite{Goldenbaum2013Robust, Goldenbaum2013Harnessing} is a disruptive technology that seamlessly integrates computation into communication, yielding a revolutionary paradigm shift from ``communicate then compute" to ``compute when communicate".
The basic principle of AirComp is to exploit the waveform/signal superposition property of  multiple-access channels (MAC) and to apply functional decomposition such that a base station (BS) directly obtains a class of nomographic functions of distributed data from concurrent sensor transmissions.
With the benefit of low-latency multiple access, AirComp has been recently applied to enable a wide range of internet of thing (IoT) applications, such as wireless federated machine learning \cite{Letaief2019Roadmap, Shi2020EdgeAI}, distributed consensus control \cite{Molinari2021Consensus}, and distributed sensing \cite{Zhu2020WDA}.

To enable reliable AirComp, one key aspect is the joint transceiver design (e.g., transmit power control, receive normalizing factor setting, and receive beamforming design) to reduce the computation error induced by the receiver noise and non-uniform channel fading. 
Therein, for  single-input-single-output (SISO) AirComp, the authors in  \cite{Liu2020Scaling, Cao2020Optimized}   proposed a computation-optimal policy  to  balance the trade-off between the noise-induced error and the signal-misalignment error.
In particular, the optimal transmit power control policy is shown to be  a combination of channel inversion policy and full power policy with a threshold-based structure. 
Moreover, in most prior works on multiple-input-single-output (MISO) AirComp \cite{Chen2018UniformForcing, Yang2020FL, Li2019Wirelessly} and multiple-input-multiple-output (MIMO) AirComp \cite{Li2019Wirelessly, Chen2018IoT, Wen2019Reduced, Zhu2019Aircompmobility}, the zero-forcing policy is commonly adopted for transmit power and receive normalizing factor  control. 
This policy perfectly compensates for the magnitude attenuation of signals at the expense of increasing the noise-induced error.
Unfortunately, when one or more individual channels are in deep fading, this policy may magnify the negative impact of noise and degrade the AirComp performance.
This is because the computation error is negatively correlated with the worst channel gain among all sensors.
Furthermore, to enhance the performance of AirComp, the authors in \cite{Jiang2019RIS, Wang2020RIS}  adopted an emerging reconfigurable intelligent surface (RIS) technology \cite{Yuan2021RIS,Fu2021Intelligent} to 
further optimize the passive beamforming at the RIS together with the transceiver design.

Most of the existing studies on AirComp \cite{Chen2018UniformForcing, Yang2020FL, Li2019Wirelessly, Chen2018IoT, Wen2019Reduced, Zhu2019Aircompmobility, Liu2020Scaling, Cao2020Optimized, Jiang2019RIS, Wang2020RIS} were restricted to static networks, where the positions of the sensory devices remain unchanged during the data aggregation process. 
However,  in some emerging  applications (e.g., consensus control \cite{Liu2020Scaling} and sensing \cite{Zhu2019Aircompmobility}), the sensors are usually embedded in  mobile devices (e.g., ground vehicles)
and may move out of the coverage area of the static ground BSs.
For example,  sensors can be mounted on the mobile ground vehicles to monitor a wild environment to avoid natural disasters \cite{Zhu2019Aircompmobility}.
Under these circumstances, the performance of AirComp may be severely degraded, especially 
in remote areas,  where the ground BSs are usually sparsely deployed or unavailable.
Moreover, because of  the channel fading, the receiver noise, and the limited transmit power at the sensors, only relying on the  transceiver design is not able to guarantee the performance of AirComp.
Therefore, it is necessary to deploy a more flexible BS to deal with the aforementioned challenges.

Fortunately, low-cost unmanned aerial vehicles (UAVs)  are considered as a promising alternative to assist the terrestrial networks \cite{Zeng2019Accessing, Zhang2020Energy}.
Recently, a growing body of research efforts have been devoted to study the deployment of UAVs as mobile BSs in  IoT networks to support information dissemination \cite{Zeng2017EnergyEfficient, Wu2018MultiUAV} and  data collection \cite{Zhan2019Time}.
This motivates us to deploy a UAV as a flying BS to aggregate data from mobile sensors via  AirComp in IoT networks, where the ground sensors  are continuously moving  and  the terrestrial BS is unavailable.  
UAV-aided AirComp enjoys the following advantages.  
First, the UAV-mounted BS is cost-effective and can be flexibly deployed to provide services when the terrestrial BS is not available.
Second, due to UAV's high altitude, the UAV can establish line-of-sight (LoS) connections with the sensors to  alleviate the performance loss of AirComp induced by channel impairments (e.g., fading).
Finally,  with controllable mobility,  the UAV can track sensors' movement to avoid long-distance transmissions and dynamically strike a balance between communication distance and sensors' transmit power, thereby enhancing the performance of AirComp.

To quantify the computation error, a performance metric that has been widely adopted for AirComp is the mean-squared error (MSE) between the estimated function value and the target function value \cite{Chen2018UniformForcing, Yang2020FL, Li2019Wirelessly, Chen2018IoT, Wen2019Reduced, Zhu2019Aircompmobility, Liu2020Scaling, Cao2020Optimized, Jiang2019RIS, Wang2020RIS}.
Hence, we formulate a time-averaged MSE minimization problem with the joint UAV trajectory and transceiver design, taking the maximum speed at the UAV as well as the peak and average transmit power at the sensors into consideration.
We aim to optimally balance the trade-off between the signal-misalignment error and the noise-induced error.
However, due to the highly coupled variables and time-dependent constraints, it is generally challenging to solve the formulated problem optimally.
The conventional method to decouple the variables is the block coordinate descent (BCD) method \cite{xu2013block}, which updates each block in an alternating manner until convergence, resulting in a non-convex trajectory design subproblem.
Although the successive convex approximation (SCA) method \cite{marks1978general} can tackle the non-convex trajectory design subproblem with the first-order Taylor approximation, it is not guaranteed to find an optimal solution of the original subproblem.
Furthermore, by using  CVX and interior-point solvers (e.g., SDPT3)\cite{Grant2014CVX, Shi2015Largescale}, the SCA method is time-consuming and may not be scalable for large-scale networks. 
In contrast, the alternating direction method of multipliers (ADMM) \cite{Bertsekas1997} is a powerful first-order method that is well-suited for large-scale convex optimization.
As inspired,  we in this paper develop a novel problem reformulation that leads to a convex trajectory design subproblem, and propose an ADMM method to solve the aforementioned subproblem.

\subsection{Contributions}
The main contributions of this paper are summarized as follows.
\begin{itemize}
\item 
This paper is one of the early attempts to study the UAV-aided AirComp system, where the UAV is deployed to provide services when the terrestrial BS is unavailable and track sensors' mobility for establishing LoS connections, thereby enhancing the performance of AirComp and the robustness against noise.
Additionally, we formulate a time-averaged MSE minimization problem by jointly designing the UAV trajectory, normalizing factors at the UAV, and transmit power at the sensors, taking into account the maximum speed of the UAV, as well as the peak and average transmit power budgets at the sensors.

\item  
To address the limitations of the existing methods, we first introduce intermediate variables as alternatives for transmit power variables,  termed as {\it{signal quality factors}} defined as the product of each sensor's transmit power and its channel power gain.
Although the reformulated problem is still non-convex, this novel variable transformation makes it easier to  decompose the original problem into three convex subproblems, which can be optimally solved.

\item 
Based on the above results, we  exploit the BCD method  to decompose the reformulated problem.
We derive the optimal closed-form solutions  for the receive normalizing factors and signal quality factors. 
To further reduce the computational complexity,  we rewrite the convex trajectory  optimization subproblem
as an ADMM form, which can update each variable in closed form.

\end{itemize}

The numerical results validate that the importance and necessity of the joint UAV trajectory and transceiver design for minimizing the time-averaged MSE and enhancing the robustness against noise in mobile networks.
It also shows that the proposed algorithm achieves significantly performance gains and reduces the simulation time by orders of magnitude compared to the existing algorithms.

\subsection{Organization}
The remainder of this paper is organized as follows. Section \ref{model}
 describes the system model and problem formulation. 
In Section \ref{Algorithm section}, we propose a BCD-ADMM method to solve the formulated problem.
Section \ref{NR section} presents the numerical results to evaluate the performance of the proposed algorithm. Finally, we conclude this paper in Section \ref{C section}.

\section{System  Model and Problem Formulation}\label{model}
\begin{figure}[t]
	\centering
	\includegraphics[width= 0.5\linewidth]{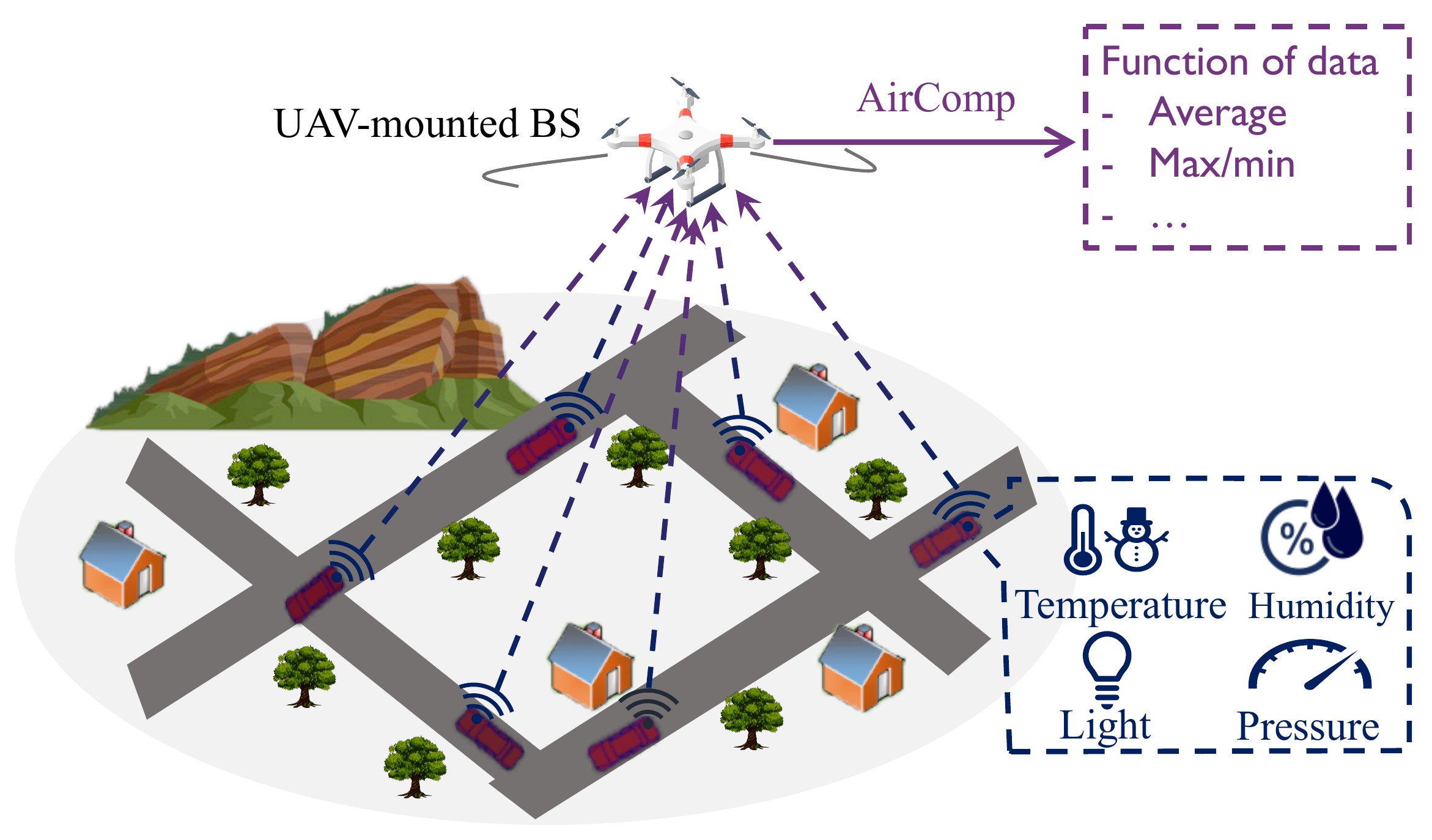}
		\vspace{-4mm}
	\caption{An illustration of a UAV-aided AirComp network.}\label{F:Systemmodel}
	\vspace{-7mm}
\end{figure}
As illustrated in Fig. \ref{F:Systemmodel},  we consider a UAV-aided AirComp network with $K$ ground mobile sensors, where the terrestrial BS is not available.
Therein, the UAV is deployed as an aerial BS to aggregate sensory data from $K$ distributed sensors during a given mission duration of  $T$ second (s). 
In addition, we assume that each ground sensor moves with a given speed and along a path designed in advance to collect data at different locations \cite{Zhu2019Aircompmobility, Zhang2021UAVmobility, Liu2020UAV-PM}. 
One practical scenario for such a consideration could be wild-area environmental monitoring \cite{Zhu2019Aircompmobility}, where the BSs are unavailable nearby and the UAV is dispatched to monitor the average temperature/humidity measured by the sensors, while the  sensors mounted on moving vehicles are employed to measure temperature/humidity data at different positions.
Both the UAV and the sensors are equipped with a single antenna due to their size and power limitations. 
To achieve ultrafast data aggregation, the  UAV  exploits AirComp to aggregate a nomographic function (e.g., average function and sum function) of the distributed data, rather than to decode each sensory data separately.


\subsection{UAV Mobility Model and Channel Model}
\subsubsection{UAV Mobility Model} 
Since the sensors are moving continuously,  we aim to design the UAV trajectory according to the sensors' locations so as to reduce the computation error.
The mobile UAV can move sufficiently close to the sensors for avoiding long-distance transmissions, thereby saving sensors' transmit power and mitigating the effect of noise.
Therefore, the UAV trajectory design provides an additional degree of freedom for  AirComp performance enhancement.
In a three-dimensions (3D) Cartesian coordinate
system, we denote the location of the UAV at time $t$ projected on the horizontal (ground) plane as $\bm{q}(t)=[x(t), y(t)]\in \mathbb{R}^{1\times 2}$, $0\leq t\leq T$ with $x(t)$ and $y(t)$ being  $x$- and $y$-coordinates at time instant $t$, respectively.
We assume that the UAV flies at a fixed altitude $H$ above the ground level.
Note that in practice, $H$ corresponds to the minimum altitude
that ensures obstacle avoidance without the need for frequent
aircraft ascending and descending.
In addition, the UAV is assumed to start the mission at an initial location, the horizontal coordinate of which is denoted as $\bm{q}_{I} = [x_I, y_I]\in \mathbb{R}^{1\times 2}$. 
Note that the initial location is determined by various factors, e.g., energy replenishment \cite{Zhan2019Time}, \cite{Group2020MultiUAV}.
 We denote $\bm q(0) = \bm q_{I}$. 
 Besides, we denote the  maximum speed of the UAV as $V_{\rm{max}}$ in meter/second (m/s).
 Hence, we have the constraints
$\sqrt{\dot{x}^2(t) +\dot{y}^2(t)} \leq V_{\text{max}}$, $0 < t < T$, where
$\dot{x}(t)$ and $\dot{y}(t)$ denote the time-derivatives of $x(t)$ and $y(t)$ at time instant $t$, respectively.

To assist a tractable algorithm design, we adopt the time discretization technique to deal with the continuous UAV trajectory design, which is widely considered in most of the existing works \cite{Wu2018MultiUAV, Zeng2017EnergyEfficient, Zhan2019Time, Group2020MultiUAV, Tang2019Alternative}.
 Specifically, the mission duration $T$ is equally divided into $N$ time slots, i.e., $ T = N\delta $, where $\delta$ denotes  time step size.
Given the maximum UAV speed $V_{\max}$ and altitude $H$,  the time step size $\delta$ needs to satisfy $\delta V_{\text{max}}\ll H$ so that the distance between the UAV and  sensors is approximately a constant during each time slot.
Based on the time discretization technique, the UAV trajectory $\bm q(t)$ over time horizon $T$ is approximated by the $(N+1)$-length sequence $\{\bm{q}[n]\}_{n =0}^{N}$ with $\bm{q}[n] \triangleq \bm q(n\delta) $ denoting the UAV's horizontal coordinate at time slot $n$. 
We denote  $\mathcal {N}$ as $\mathcal {N}\triangleq \{1,\ldots,N\} $.
 The UAV's mobility constraints are given by
 \setlength\arraycolsep{2pt}
\begin{eqnarray}
&&\|\bm q [n]-\bm q[n-1]\|_2 \leq V_{\max}\delta,  \forall n \in \mathcal {N},  \label{speed constr}\\
&&\bm{q}[0] = \bm q_{I},  \label{initial location}
\end{eqnarray} 
where constraints \eqref{speed constr} correspond to the UAV speed constraint and constraint \eqref{initial location}  is subject to the initial location of the UAV.

\subsubsection{Channel Model}
Recent field experiments by Qualcomm verified that UAV-to-ground channels are  dominated by  LoS links when the UAV flies above a certain altitude \cite{Qual2017LET}. 
In this paper, we assume that each connection from the sensor to the UAV is dominated by the LoS channel.
Moreover, the Doppler effect resulting from  mobility is assumed to be perfectly compensated \cite{Mengali1997Sync}.
The horizontal coordinate of sensor $k$ at time slot $n$ is denoted as  $\mathbf w_k[n]=[x_k[n], y_k[n]]\in \mathbb{R}^{1 \times 2}$ with $x_k[n]$ and $y_k[n]$ being  $x$- and $y$-coordinates at time slot $n$, respectively. The set of ground sensors is denoted as  $\mathcal {K}\triangleq \{1,\ldots,K\} $, $K> 1$.
The time-varying channel from sensor $k$ to  the UAV at time slot $n$ follows the free-space path loss model \cite{Wu2018MultiUAV}
\begin{eqnarray}
{h}_k[n]=\sqrt{\beta_k[n]}\tilde{h}_k[n], \forall k \in \mathcal {K}, \label{chan0}
\end{eqnarray}
where $|\tilde{h}_k[n]| =1$,  and $\beta_k[n]$ denotes the free-space path loss. Specifically, $\beta_k[n]$ is modeled as $\beta_k[n]=\beta_0 d_k^{-2}[n],$
where $\beta_0$ represents the channel power gain at the reference distance of 1 m related to the carrier frequency and antenna gain, and $d_k[n] = \sqrt{H^2+\|\bm{q}[n]-\textbf{w}_k[n]\|_2^2}$ is the  distance between the UAV and sensor $k$ at time slot $n$.

\subsection{AirComp for Data Aggregation}
Let $z_{k}[n] \in \mathbb{C}$ denote the sensory data of sensor $k$ at  time slot $n$.
The UAV intends to obtain a  function (e.g., geometric mean and arithmetic mean) of the measured data from $K$  sensors at each time slot $n\in{\cal N}$, where the target function of $K$ variables is denoted as $f[n]:\mathbb{C}^{K}\rightarrow\mathbb{C}$.
By using a mathematical property of theoretical function representation, the target function can be expressed as its nomographic form as a function of a finite sum of univariate functions \cite{Liu2020Scaling}. 
Specifically, the target function computed at the UAV is written as in its nomographic form:
\begin{eqnarray}
f[n]\big(z_{1}[n],\ldots, z_K[n]\big) =\phi \bigg(\sum \limits_{k=1}^{K}\psi_k\big(z_{k}[n]\big)\bigg),\label{functions}	
\end{eqnarray}
where $\psi_k:\mathbb{C}\rightarrow\mathbb{C}$ is the pre-processing function, $\forall k\in \mathcal{K}$ and $\phi:\mathbb{C}\rightarrow\mathbb{C}$ is the post-processing function.
Note that the UAV's original computation of $f[n]$ by processing $K$ variables has been decomposed into $(K+1)$ small tasks of designing $\{\psi_1,\ldots, \psi_K, \phi\}$.
Based on the above function representation, we present an efficient AirComp technique \cite{Liu2020Scaling, Cao2020Optimized} for the low-latency target function $f[n]$ computation by exploiting the waveform/signal superposition property of MACs.
With AirComp,  each sensor pre-processes its own signal $z_{k}[n]$ with function $\psi_k$  and simultaneously transmits $\psi_k(z_{k}[n])$ to the UAV, while the UAV post-processes the received sum of signals $\sum_{k=1}^{K}\psi_k\big(z_{k}[n]\big)$ with function $\phi$ to estimate the desired computation $f[n]$. 
The sensors' transmissions are assumed to be well synchronized \cite{Liu2020Scaling, Cao2020Optimized}.

Without loss of generality, in this paper, we consider the case where the UAV  computes the average of distributed data generated by the sensors \cite{Yang2020FL}, \cite{Cao2020Optimized}.
Therefore, the   function of interest at the UAV at time slot $n$ is given by
\vspace{-2mm}
\begin{eqnarray}\label{average}
f[n]=\frac{1}{K}\sum \limits_{k=1}^{K}\psi_k\big(z_{k}[n]\big).
\end{eqnarray}
To compute  function $f[n]$ in \eqref{average} via AirComp, the specific procedure is described as follows.
The transmit signals after pre-processing at each sensor are give by
\begin{eqnarray}
s_k[n] = \psi_k(z_{k}[n]), \forall k\in \mathcal{K},
\end{eqnarray}
where  $\{s_{k}[n],\forall k\in \mathcal{K}\}$ are assumed to be independent with each other and normalized with zero mean and unit variance, i.e., $\mathbb{E}(s_k[n]) = 0$, $\mathbb{E}(s_k[n]s_k^{\sf H}[n]) = 1$, and $\mathbb{E}[s_{i}[n]s_{j}[n]^{\sf H}] = 0 , \forall i \neq j$, as in \cite{Yang2020FL}, \cite{Cao2020Optimized}.
After the sensors simultaneously send their pre-processed signals $\{s_{k}[n]\}$ to the UAV over a single frequency channel, the received signal at the UAV is given by
\begin{eqnarray}
y[n]=\sum \limits_{k=1}^{K}b_k[n]h_{k}[n]s_{k}[n]+ e[n],\label{time_Y}
\end{eqnarray}
where $b_k[n] \in \mathbb{C}$ denotes the transmit pre-coding coefficient at sensor $k$ for channel-fading compensation and $e[n]$ denotes the  additive white Gaussian noise (AWGN), i.e., $e[n]\thicksim\cal{C}\cal{N}$$(0,\sigma^2)$.  
Therein, the peak and average transmit power constraints at sensor $k$ are respectively given by
\begin{eqnarray}
	|b_k[n]|^2 \leq  P_k, \ \frac{1}{N}\sum_{n=1}^{N} |b_k[n]|^2 \leq  \bar{P}_k.\label{average p}
\end{eqnarray}

Upon receiving  signal $y[n]$ in \eqref{time_Y}, the estimated average function after post-precessing at the UAV is given by
\vspace{-3mm}
\begin{eqnarray}\label{ave_functions}
\hat f[n] = \frac{y[n]}{K\eta[n]},
\end{eqnarray}
where  $\eta[n]\in \mathbb{C}$ is a receive normalizing factor at the UAV. Note that it is applied to both signals and the noise, and is designed to provide power compensation for signals and suppress the noise, thereby obtaining an accurate estimation of the target function $f[n]$.

\subsection{Performance Metric}
To quantify the AirComp performance, the distortion of the estimation function $\hat f[n]$ with respect to (w.r.t.) the desired function $f[n]$ is measured by the MSE between $\hat f[n]$ and   $f[n]$, which is widely adopted  in the existing AirComp studies \cite{Chen2018UniformForcing, Yang2020FL, Li2019Wirelessly, Chen2018IoT, Wen2019Reduced, Zhu2019Aircompmobility, Liu2020Scaling, Cao2020Optimized, Jiang2019RIS, Wang2020RIS}.
In particular, the corresponding  MSE at time slot $n$ is given by
\begin{eqnarray}\label{MSE_n}
\hspace{-2em}{\sf MSE}[n] = \mathbb{E}[|\hat{f}[n]-f[n]|^2]
&=&\frac{1}{K^2}\mathbb{E}\Bigg[\Bigg(\frac{y[n]}{\sqrt{\eta[n]}}- \sum \limits_{k=1}^Ks_{k}[n]\Bigg)^2 \Bigg] \nonumber \\
&=&\frac{1}{K^2}\Bigg[\sum \limits_{k=1}^{K}\Bigg(\frac{b_k[n]h_{k}[n]}{\eta[n]}-1\Bigg)^2+\frac{\sigma^2}{|\eta[n]|^2}\Bigg],
\end{eqnarray}
where the expectation is taken over the distributions of the transmitted signals $\{s_k[n]\}$ and  noise $e[n]$.
Accordingly, the time-averaged MSE over $N$ time slots is given as
\begin{eqnarray}\label{MSE_s_y_t}
\overline{\sf MSE} =  \frac{1}{N}\sum\limits_{n=1}^{N} {\sf MSE}[n].
\end{eqnarray}

In this paper, we aim to minimize the time-averaged MSE by jointly optimizing  $\{b_k[n]\}$,  $\{\eta[n]\}$, and $\{\bm q[n]\}$. 
Prior to formulating the optimization problem, we  present some properties of objective  function \eqref{MSE_s_y_t} as follows. 
\begin{proposition}\label{real proposition}
With given any  amplitudes of complex transmit pre-coding coefficients $\{b_k[n]\}$, complex normalizing factors $\{\eta[n]\}$, and complex channel coefficient $\{h_k[n]\}$, to achieve the minimum $\overline{\sf MSE}$, each term $b_k[n]h_{k}[n]/{\eta[n]}$ in  \eqref{MSE_s_y_t} must be  real and non-negative for all $n\in\mathcal{N}, k\in\mathcal{K}$.
\end{proposition}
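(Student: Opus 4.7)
The plan is to fix the amplitudes $|b_k[n]|$, $|\eta[n]|$, $|h_k[n]|$ and show that the phase of each complex ratio $c_k[n] \triangleq b_k[n]h_k[n]/\eta[n]$ can be optimized separately, with the optimal phase being zero. This reduces Proposition \ref{real proposition} to a one-dimensional angular optimization.

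First I would observe that the noise term $\sigma^2/|\eta[n]|^2$ in \eqref{MSE_n} depends only on $|\eta[n]|$, so it is frozen under the hypothesis that amplitudes are fixed. Likewise, the modulus $|c_k[n]| = |b_k[n]||h_k[n]|/|\eta[n]|$ is frozen. Hence, for each slot $n$ and each sensor $k$, the only remaining degree of freedom in the corresponding summand of \eqref{MSE_s_y_t} is the phase $\theta_k[n] \triangleq \arg(c_k[n])$. Since the contributions for different $(k,n)$ are additively separable in $\overline{\sf MSE}$, it suffices to minimize each summand individually over its own phase.

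Next I would expand the signal-misalignment term with $c_k[n] = |c_k[n]|e^{j\theta_k[n]}$:
\begin{equation}
\bigl|c_k[n]-1\bigr|^2 = |c_k[n]|^2 - 2|c_k[n]|\cos\theta_k[n] + 1.
\end{equation}
As a function of $\theta_k[n]\in[0,2\pi)$ with $|c_k[n]|\geq 0$ fixed, this is minimized uniquely (up to multiples of $2\pi$) at $\theta_k[n]=0$, which corresponds to $\cos\theta_k[n]=1$ and yields the minimum value $(|c_k[n]|-1)^2$. Applying this to every $(k,n)$ pair shows that the minimum of $\overline{\sf MSE}$ over phases is attained iff $\arg(c_k[n])=0$, i.e., $b_k[n]h_k[n]/\eta[n]$ is real and non-negative, for all $k\in\mathcal{K}, n\in\mathcal{N}$.

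There is no real obstacle here; the argument is essentially a phase-alignment observation made separable by the additive structure of $\overline{\sf MSE}$ and the fact that $|\eta[n]|^2$ controls the noise term while the amplitude $|c_k[n]|$ controls the magnitude-misalignment error. The only subtlety worth flagging in the write-up is to point out explicitly that fixing the amplitudes of $b_k[n]$, $\eta[n]$, and $h_k[n]$ decouples the noise contribution from the phase choices, so that optimizing the phase of each $c_k[n]$ independently is without loss of generality. This justifies restricting $b_k[n]h_k[n]/\eta[n]$ to the non-negative real axis in all subsequent problem reformulations.
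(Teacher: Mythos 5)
Your proof is correct and follows essentially the same route as the paper's: the paper bounds $-2\mathcal{R}\bigl(b_k[n]h_k[n]/\eta[n]\bigr)$ from below by $-2\bigl|b_k[n]h_k[n]/\eta[n]\bigr|$ with equality iff the ratio is real and non-negative, which is exactly your polar-coordinate observation that the cross term $-2|c_k[n]|\cos\theta_k[n]$ is minimized at $\theta_k[n]=0$. Both arguments rely on the same additive separability of $\overline{\sf MSE}$ over $(k,n)$ and on the noise term being fixed by $|\eta[n]|$, so there is nothing further to add.
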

\begin{proof}
	Please refer to Appendix \ref{real proposition proof}.
\end{proof}
Based on \textbf{Proposition} \ref{real proposition}, without loss of generality,
we set $\eta[n]\in \mathbb{R}^+$, and $b_k[n]\triangleq \frac{ \sqrt{p_{k}[n]} h_{k}^{\dagger}[n]}{ |h_{k}[n]|}$ with $p_{k}[n]\in [0, P_k]$ to perfectly offset the phases introduced by the complex channel coefficients such that  each term ${b_k[n]h_{k}[n]}/{\eta[n]}$ in \eqref{MSE_s_y_t} is real and non-negative, in the rest of paper. 
In this sense, it  allows us to focus on the control of $p_k[n]$ instead of $b_k[n]$, where  $p_k[n]$ represents the transmit power at sensor $k$ at  time slot $n$.
Hence, the time-averaged MSE is rewritten as 
\begin{eqnarray}\label{MSE_s_y_t1}
\hspace{-2em}\overline{\sf MSE}  &=&  \frac{1}{NK^2}\sum_{n=1}^{N}\Bigg[\sum \limits_{k=1}^{K}\Bigg(\frac{\sqrt{p_{k}[n]}\big|h_k[n]\big|}{\eta[n]}-1\Bigg)^2 +\frac{\sigma^2}{\eta^2[n]}\Bigg] \nonumber \\
&=&\frac{1}{NK^2}\sum_{n=1}^{N}\Bigg[\sum\limits_{k=1}^{K}\Bigg(\frac{\sqrt{p_{k}[n]}\sqrt{\beta_0}}{\eta[n]\sqrt{(H^2+\|\bm{q}[n]-\textbf{w}_k[n]\|_2^2)}}-1\Bigg)^2 +\frac{\sigma^2}{\eta^2[n]}\Bigg].
\end{eqnarray}

\subsection{Problem Formulation}
Let  $\bm p = \{ p_{k}[n],\forall n\in\mathcal{N},\forall k\in\mathcal{K}\}$, $\bm \eta = \{\eta[n],\forall n\in\mathcal{N}\}$, and $\bm q = \{\bm q[n],\forall n = 0, \ldots, N \}$.
By assuming that the time-dependent locations of the ground sensors are known in advance \cite{Liu2020UAV-PM}, our goal is to minimize $\overline{\sf MSE} $ by jointly optimizing the transmit power $\bm p$ of the sensors, the normalizing factors $\bm \eta$  at the UAV, and the UAV trajectory $\bm q$ over different time slots. 
The optimization problem  is formulated as 
\begin{subequations}\label{original formulation}
\begin{eqnarray}
\hspace{-6mm}\mathop{\text{minimize}}_{\bm p, \bm \eta, \bm q } &&  \overline{\sf MSE}\nonumber \\
\text{subject to} 
 && 0\leq p_k[n]\leq P_k,  \forall k \in \mathcal {K},  \forall n \in \mathcal {N}, \label{p1001} \\
 &&0\leq\frac{1}{N}\sum_{n=1}^N p_k[n]\leq \bar{P}_k,  \forall k \in \mathcal {K}, \label{p1002}\\
 &&\eta[n] \geq 0,  \forall n \in \mathcal {N},\label{etacons} \\
&& \|\bm q [n]-\bm q[n-1]\|_2 \leq V_{\max}\delta,\forall n \in \mathcal {N},\label{qcons: speed}\\
&&\bm{q}[0] = \bm q_{I}.  \label{qcons: initial}
\end{eqnarray}
\end{subequations}
To make constraint \eqref{p1002} non-trivial, we assume $\bar{P}_k<P_k$ in this paper.
Note that the challenges of solving problem \eqref{original formulation} lie in the following two main aspects. 
First, the transmit power  $\bm p$, normalizing factors $\bm \eta$, and UAV trajectory $\bm q$ are highly coupled over different time slots. 
Second, for fixed  transmit power  $\bm p$ and normalizing factors $\bm \eta$, although all the constraints  of problem \eqref{original formulation} are convex w.r.t. $\bm q$, the objective function $\overline{\sf MSE}$ is still non-convex w.r.t. $\bm q$. 
As a result, problem \eqref{original formulation} is a non-convex optimization problem.
In general, there is no standard method for solving such non-convex optimization problems optimally. 

The BCD-SCA method proposed in\cite{Fu2021UAV} can be adopted to solve problem \eqref{original formulation}.
Specifically,  the variables $\bm p$, $\bm \eta$, and $\bm q$ are decoupled by exploiting the BCD method.
And the non-convexity of objective function $\overline{\sf MSE}$ in the resulting subproblem related to $\bm q$ is  tackled by adopting the SCA technique. 
However, the SCA algorithm only optimizes the approximate lower bound of trajectory subproblem. 
Consequently, the  aforementioned approximate algorithm is not guaranteed to find an optimal solution of the resulting subproblem related to $\bm q$, which may result in low-quality solutions.  
Moreover, the BCD-SCA method in \cite{Fu2021UAV} always relies on the CVX and interior-point solvers (e.g., SDPT3) to solve the approximated convex trajectory optimization subproblems numerically.
Clearly, the computational costs of these second-order algorithms are not scalable when the problem size ($N$ or $K$) is large.

 To address the limitations of the existing methods, we first transform problem \eqref{original formulation} into an equivalent and more tractable form in the sequel. 
Specifically, for each sensor $k$, we define \textit{signal quality factor} at each time slot $n$ as the product of its transmit power and channel gain (i.e., $\theta_k[n] \triangleq p_k[n]|h_k[n]|^2$).
 Let $\bm \theta = \{\theta_k[n], \forall k \in \mathcal{K}, \forall n \in \mathcal{N}\}$.
By introducing $\theta_k[n]$,  problem \eqref{original formulation} can be equivalently transformed as 
\begin{subequations}
\begin{eqnarray}
\mathscr{P}:\mathop{\mathop{\text{minimize}}_{\bm \theta, \bm \eta, \bm q }}   && \frac{1}{NK^2}\sum_{n=1}^{N}\Bigg[\sum_{k=1}^K\!\Bigg(\frac{\sqrt{\theta_{k}[n]}}{\eta[n]}-1\Bigg)^2+\frac{\sigma^2}{\eta^2[n]}\Bigg] \nonumber \\
\text{subject to}
&&0\leq \frac{\theta_{k}[n]}{|h_k[n]|^2}\leq P_k,  \forall k,  \forall n, \label{thetacons: peakP} \\
 &&0\leq\frac{1}{N}\sum_{n=1}^N \frac{\theta_{k}[n]}{|h_k[n]|^2}\leq \bar{P}_k,  \forall k, \label{thetacons: averageP}\\
 && \text{constraints}\ \eqref{etacons}, \eqref{qcons: speed}, \eqref{qcons: initial}.
\end{eqnarray}
\end{subequations}
It is easily verified that problem $\mathscr{P}$ is equivalent to problem \eqref{original formulation}.
With such a transformation, we only need to focus on solving problem $\mathscr{P}$ in the rest of the paper.
Although problem $\mathscr{P}$ is still a non-convex optimization problem due to the coupled optimization variables, it reduces to a convex subproblem when any two blocks of variables $\bm \theta$, $\bm \eta$, and $\bm q$ are fixed, which is not the case for problem \eqref{original formulation}.
This facilitates the development of an efficient algorithm with high-quality solutions shown later in Section \ref{Algorithm section}.
\begin{remark}
One can observe that the objective function of problem $\mathscr{P}$ consists of two components, which are the signal misalignment error (i.e., $\frac{1}{NK^2}\sum_{n=1}^{N}\sum_{k=1}^K(\frac{\sqrt{\theta_{k}[n]}}{\eta[n]}-1)^2$) and the noise-induced error (i.e., $\frac{1}{NK^2}\sum_{n=1}^{N}\frac{\sigma^2}{\eta^2[n]}$).
Ideally, to minimize the time-averaged MSE, we can enlarge $\bm \eta$ to suppress the noise-induced error while compelling the signal quality factors $\bm \theta$ to equal the normalizing factors to perfectly align the signals.
Unfortunately, due to the limited power budget and non-uniform channel fading, the values of the signal quality factors are usually limited.
Thanks to the UAV's mobility, we can design the UAV trajectory according to the sensors' locations to construct favorable channels and thus increase the signal quality factors compared to the static UAV/BS case, thereby decreasing the time-averaged MSE.
\end{remark}

\section{BCD-ADMM Method for Solving Problem $\mathscr{P}$} \label{Algorithm section}
In this section, we develop a low-complexity algorithm, named  BCD-ADMM method, to solve problem $\mathscr{P}$ efficiently.
Specifically, to address the highly coupled optimization variables, we adopt the BCD method to decompose the joint
optimization problem $\mathscr{P}$ into three convex quadratically constrained quadratic programming (QCQP) subproblems, whose global optimal solutions can be obtained. 
To further reduce the computational complexity, we derive closed-form expressions for $\bm \theta$ and $\bm \eta$, followed by presenting a low-complexity implementation using the ADMM to solve the convex trajectory optimization subproblem with closed-form solutions for each variable updating.  
\subsection{Normalizing Factors Optimization}
In this subsection, given $\bm \theta$ and $\bm q$, we reformulate problem $\mathscr{P}$ by optimizing  $\bm \eta$  as
	\begin{eqnarray}
	\mathscr{P}_{1.1}:\mathop{\text{minimize}}_{\{\eta[n]\geq 0\}}  &&\sum_{n=1}^{N}\Bigg[\sum \limits_{k=1}^K\Bigg(\frac{\sqrt{\theta_k[n]}}{\eta[n]}-1\Bigg)^2+\frac{\sigma^2}{\eta^2[n]}\Bigg]. \nonumber
	\end{eqnarray}
Problem $\mathscr{P}_{1.1}$  can be decoupled into $N$ subproblems each for optimization $\eta[n]$ to minimize the MSE. The $n$-th subproblem is written as
\begin{eqnarray}\label{n-th eta}
\mathop{\text{minimize}}_{\eta[n]\geq 0} &&\sum \limits_{k=1}^K\Bigg(\frac{\sqrt{\theta_{k}[n]}}{\eta[n]}-1\Bigg)^2+\frac{\sigma^2}{\eta^2[n]}\!.
\end{eqnarray}
By denoting $\nu[n] = 1/\eta[n]$, problem $\eqref{n-th eta}$ can be transformed to a convex quadratic problem as
\begin{eqnarray}\label{n-th nu}
\mathop{\text{minimize}}_{\nu[n]\geq 0}  && \sum \limits_{k=1}^K\left(\sqrt{\theta_k[n]}\nu[n]-1\right)^2+\sigma^2(\nu[n])^2.
\end{eqnarray}

By setting the first derivative of the objective function in problem $\eqref{n-th nu}$ to be zero, we can obtain the optimal solution $\nu^{\star}[n]$ to problem \eqref{n-th nu}. As a result, the optimal solution to problem \eqref{n-th eta} is obtained as $\eta^{\star}[n] = {1}/{\nu^{\star}[n]}, \forall n$, given in the following proposition.
\begin{proposition}
With any given $\bm \theta$ and $\bm q$, the optimal solution  $\bm \eta$ of problem  $\mathscr{P}_{1.1}$ is given by
\begin{eqnarray} \label{solution denoise}
\eta^{\star}[n]  =  \frac{\sigma^2+ \sum_{k=1}^K\theta_{k}[n]}{\sum_{k=1}^K\sqrt{\theta_{k}[n]}}, \forall n \in \mathcal{N}.
\end{eqnarray}
\end{proposition}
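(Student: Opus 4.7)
The plan is to follow the two steps already set up in the excerpt, namely decoupling across time slots and then applying the reciprocal substitution $\nu[n]=1/\eta[n]$, and to verify that the resulting unconstrained stationary point is both feasible and optimal.

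First, I would justify the per-slot decomposition: the objective of $\mathscr{P}_{1.1}$ is a sum of $N$ independent terms, each depending only on a single $\eta[n]$, and the constraint set $\{\eta[n]\ge 0\}$ is separable, so the joint minimum is attained by minimizing each summand $\sum_{k=1}^{K}\bigl(\sqrt{\theta_k[n]}/\eta[n]-1\bigr)^{2}+\sigma^{2}/\eta^{2}[n]$ independently. Next, I would observe that at any optimum $\eta^{\star}[n]$ must be strictly positive (otherwise the objective is $+\infty$), so the substitution $\nu[n]=1/\eta[n]>0$ is an order-preserving bijection between $(0,\infty)$ and $(0,\infty)$; hence minimizing over $\eta[n]\ge0$ is equivalent to minimizing the transformed objective of \eqref{n-th nu} over $\nu[n]\ge0$.

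Then I would expand the transformed objective as
\begin{equation*}
g_n(\nu)=\Bigl(\textstyle\sum_{k=1}^{K}\theta_k[n]+\sigma^{2}\Bigr)\nu^{2}-2\Bigl(\textstyle\sum_{k=1}^{K}\sqrt{\theta_k[n]}\Bigr)\nu+K,
\end{equation*}
which is a strictly convex quadratic in $\nu$ (its leading coefficient is positive whenever any $\theta_k[n]>0$ or $\sigma^{2}>0$, which holds in any non-degenerate instance). Setting $g_n'(\nu)=0$ gives the unique unconstrained minimizer
\begin{equation*}
\nu^{\star}[n]=\frac{\sum_{k=1}^{K}\sqrt{\theta_k[n]}}{\sum_{k=1}^{K}\theta_k[n]+\sigma^{2}}\ge 0,
\end{equation*}
so the non-negativity constraint is inactive and $\nu^{\star}[n]$ solves problem \eqref{n-th nu}. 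Finally, inverting via $\eta^{\star}[n]=1/\nu^{\star}[n]$ yields the closed-form expression \eqref{solution denoise}.

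There is no real obstacle here: the problem is a separable, one-dimensional, strictly convex quadratic program after the reciprocal change of variables, so uniqueness and optimality come for free from first-order conditions. The only subtle points I would be careful to state explicitly are (i) that $\eta^{\star}[n]=0$ cannot be optimal, which legitimizes the substitution, and (ii) that the unconstrained stationary point automatically lies in the feasible region $\nu\ge 0$, which lets me drop the non-negativity constraint without invoking KKT multipliers.
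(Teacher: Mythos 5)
Your proposal is correct and follows essentially the same route as the paper: decouple the objective across time slots, substitute $\nu[n]=1/\eta[n]$ to obtain a convex quadratic, set its derivative to zero, and invert; your computation of $\nu^{\star}[n]$ and the resulting $\eta^{\star}[n]$ matches \eqref{solution denoise}. The extra care you take about the positivity of $\eta^{\star}[n]$ and the feasibility of the unconstrained stationary point is a welcome tightening of details the paper leaves implicit, but it does not change the argument.
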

\begin{remark}\label{eta remark}
Note that the normalizing factor $\eta^{\star}[n]$  monotonically
increases with the  noise power $\sigma^2$. 
This indicates that as the noise power increases, a larger normalizing factor $\eta^{\star}[n]$ is required to suppress the noise-induced error, otherwise the time-averaged MSE will increase.
\end{remark}


\subsection{Signal Quality Factors  Optimization}
In this subsection, we present the solution to problem $\mathscr{P}$ by optimizing  $\bm  \theta$  when $\bm q$ and $\bm \eta$ are fixed. The corresponding optimization problem is given by
\begin{subequations}
\begin{eqnarray}
\mathscr{P}_{1.2}:\mathop{\text{minimize}}_{\bm \theta} &&\sum_{n=1}^{N}\sum \limits_{k=1}^K\Bigg(\frac{\sqrt{\theta_{k}[n]}}{\eta[n]}-1\Bigg)^2\nonumber \\
	\text{subject to} 
&& \text{constraints} \ \eqref{thetacons: peakP}, \eqref{thetacons: averageP}, \nonumber
\end{eqnarray}
\end{subequations}
where the constant term $\{\sigma^2/\eta^2[n]\}$  is ignored in the objective function. 
In this case, we decompose problem $\mathscr{P}_{1.2}$ into the following $K$ subproblems  for optimizing $\theta_k[n]$, $\forall n \in \mathcal{N}$   to minimize the MSE for one sensor,
\vspace{-3mm}
\begin{subequations}\label{k-th theta}
\begin{eqnarray}
\mathop{\text{minimize}}_{\{ \theta_k[n]\}}&&\sum_{n=1}^{N}\Bigg(\frac{\sqrt{\theta_{k}[n]}}{\eta[n]}-1\Bigg)^2\nonumber \\
\text{subject to} 
&& 0\leq \frac{\theta_{k}[n]}{|h_k[n]|^2}\leq P_k,    \forall n, \label{k-thetacons: peakP}  \\ 
&& 0\leq\frac{1}{N}\sum_{n=1}^N \frac{\theta_{k}[n]}{|h_k[n]|^2}\leq \bar{P}_k.   \label{k-thetacons: averageP}
 \end{eqnarray}
\end{subequations}
 Note that problem \eqref{k-th theta} is a convex linearly constrained quadratic program (QP) that can be directly solved by using modeling framework CVX and interior-point solvers (e.g., SDPT3)  \cite{Grant2014CVX, Shi2015Largescale}, similar to  \cite{Shi2014GroupSparse, Chen2008CrossLayer}.
However, by exploring its special property, we can obtain more efficient solutions. 
Because strong duality holds between problem \eqref{k-th theta} and its Lagrange dual problem. We can leverage the Lagrange-duality method to optimally solve problem \eqref{k-th theta}.
Let $\alpha_n \geq 0$ denote the dual variable associated with the $n$-th constraint in \eqref{k-thetacons: peakP}, $\forall n\in \mathcal{N}$. 
Let $\lambda\geq 0$ denote the dual variable associated with the constraint \eqref{k-thetacons: averageP}.
Then the Lagrangian of problem \eqref{k-th theta} is
\begin{eqnarray}
 &&\hspace{-3em}\mathcal{L}(\!\{\theta_k[n]\}, \{\alpha_n\}, \lambda) \!=\!\!\sum_{n=1}^{N}\!\!\left(\frac{\sqrt{\theta_{k}[n]}}{\eta[n]}\!-\!1\right)^2 \!\!\!\!
 +\! \sum_{n=1}^{N}\! \alpha_n\!\left(\frac{\theta_{k}[n]}{|h_k[n]|^2} \!-\! P_k \!\right) \!+\! \lambda\!\left(\! \sum_{n=1}^N \frac{\theta_{k}[n]}{|h_k[n]|^2} \!-\! N\bar{P}_k\!\right)\!. 
\end{eqnarray}
By applying the Karush-Kuhn-Tucker (KKT) conditions, we obtain the following result.
\begin{proposition} \label{theta solution lemma}
With any given $\bm \eta$ and $\bm q$, the optimal solution  $\bm \theta$ of problem  $\mathscr{P}_{1.2}$ is given by
\begin{eqnarray} \label{theta solution}
&&\theta^{\star}_k[n] = 
\left\{
\begin{aligned}
& \min\big\{ \eta^2[n], P_k|h_{k}[n]|^2 \big\}, 
\ \text{if} \  \min\Big\{ \frac{\eta^2[n]}{|h_{k}[n]|^2}, P_k \Big\}\leq N\bar{P}_k,  \\
&\min \Big\{ \Big(\frac{\eta[n]|h_{k}[n]|^2}{\big|h_{k}[n]|^2 +  \lambda^{\star}\eta^2[n]}\Big)^2, P_k|h_{k}[n]|^2 \Big\}, \  \text{otherwise},
\end{aligned}
\right.
\end{eqnarray}
where  $\lambda^{\star}$ is a constant that ensures the average power constraint $\sum_{n=1}^N {\theta^{\star}_k[n]}/{|h_k[n]|^2} =  N\bar{P}_k$, which can efficiently be found  via a one-dimensional bisection search, though a closed-form expression is not attainable.
\end{proposition}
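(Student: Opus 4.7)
The plan is to apply KKT conditions to the Lagrangian constructed in the paper, leveraging the convexity of problem \eqref{k-th theta}. As a preliminary, I would verify that the per-term objective $(\sqrt{\theta_k[n]}/\eta[n]-1)^2$ is convex on $[0,\infty)$ (its second derivative equals $1/(2\eta[n]\theta_k[n]^{3/2})>0$), and that this derivative tends to $-\infty$ as $\theta_k[n]\to 0^+$. The latter observation implies the nonnegativity constraint is never tight at the optimum, so no multiplier is needed for it. Since the remaining constraints are linear and Slater's condition holds (for instance, $\theta_k[n]=0$ is strictly interior to the average constraint whenever $\bar{P}_k>0$), strong duality holds and the KKT conditions are both necessary and sufficient.

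Next, setting $\partial\mathcal{L}/\partial\theta_k[n]=0$ yields the stationarity equation
\begin{equation*}
\frac{1}{\eta^2[n]} - \frac{1}{\eta[n]\sqrt{\theta_k[n]}} + \frac{\alpha_n+\lambda}{|h_k[n]|^2} = 0,
\end{equation*}
whose unique positive root is $\sqrt{\theta_k[n]} = \eta[n]|h_k[n]|^2 / (|h_k[n]|^2+(\alpha_n+\lambda)\eta^2[n])$. Complementary slackness on the peak constraint then gives two cases per slot: $\alpha_n>0$ forces $\theta_k^\star[n]=P_k|h_k[n]|^2$, while $\alpha_n=0$ yields the interior formula with $\lambda=\lambda^\star$. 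Combining these two cases by primal feasibility produces the $\min\{\cdot,\,P_k|h_k[n]|^2\}$ expression appearing in \eqref{theta solution}.

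It then remains to dispose of $\lambda^\star$. If the average power constraint is slack then $\lambda^\star=0$ and the candidate collapses to $\min\{\eta^2[n],P_k|h_k[n]|^2\}$, matching the first branch of \eqref{theta solution}; consistency of this branch with \eqref{k-thetacons: averageP} yields the stated sufficient condition on $\sum_{n}\min\{\eta^2[n]/|h_k[n]|^2,P_k\}$. Otherwise $\lambda^\star>0$, complementary slackness makes \eqref{k-thetacons: averageP} active, and the second branch of \eqref{theta solution} applies with $\lambda^\star$ defined implicitly by $\sum_{n=1}^N \theta_k^\star[n]/|h_k[n]|^2=N\bar{P}_k$.

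The main technical point is existence, uniqueness, and efficient computability of $\lambda^\star$ in the active case. I would handle this by observing that, for each $n$, the candidate value $\theta_k^\star[n]/|h_k[n]|^2 = \min\{P_k,\,\eta^2[n]|h_k[n]|^2/(|h_k[n]|^2+\lambda\eta^2[n])^2\}$ is a continuous, non-increasing function of $\lambda\geq 0$, strictly decreasing on the interior branch, and tending to $0$ as $\lambda\to\infty$. Summing over $n$ preserves continuity and monotonicity, so the equation $\sum_n\theta_k^\star[n]/|h_k[n]|^2 = N\bar{P}_k$ admits a unique root $\lambda^\star$ precisely when the $\lambda=0$ profile violates the average budget, and a one-dimensional bisection search converges to it at a geometric rate. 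Once monotonicity is established, the KKT verification is complete.
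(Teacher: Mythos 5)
Your proof is correct and follows essentially the same KKT/complementary-slackness route as the paper's own proof, arriving at the identical stationarity formula and the same case split on $\alpha_n$ and $\lambda$. You additionally supply details the paper omits---ruling out multipliers for the nonnegativity constraints via the objective's derivative diverging to $-\infty$ as $\theta_k[n]\to 0^+$, and proving monotonicity in $\lambda$ to justify existence, uniqueness, and bisection for $\lambda^{\star}$---and your summed branch condition $\sum_{n}\min\{\eta^2[n]/|h_k[n]|^2,P_k\}\le N\bar{P}_k$ matches what the paper's appendix actually derives (the proposition statement itself omits the summation).
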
	
\begin{proof}
	Please refer to Appendix \ref{theta solution lemma proof}.
\end{proof}
\begin{remark}\label{theta remark}
Note that if $\theta^{\star}_k[n] = \eta^2[n], \forall k, \forall n$, then the signal-misalignment error is zero, i.e., $\sum_{n=1}^N\sum_{k=1}^K\big({\sqrt{\theta_{k}[n]}}/{\eta[n]}-1\big)^2 = 0$.
Since the sensors' power budget is limited and  $\eta[n]$ is applied  to all sensors with different channels at each time slot,
$\theta^{\star}_k[n]$ may not be always equal  to $\eta^2[n]$ for all $n \in \mathcal{N}$, $k \in \mathcal{K}$.
If so, then $\theta^{\star}_k[n]$ has to be $\big({\eta[n]|h_{k}[n]|^2}/{(|h_{k}[n]|^2 +  \lambda^{\star}\eta^2[n])}\big)^2$ or $P_k|h_{k}[n]|^2$. 
Observing that when   $\theta_{k}[n]\leq\eta^2[n]$,  the objective function of problem $\mathscr{P}_{1.2}$ 
 monotonically decreases as $\theta_{k}[n]$ increases.
Thus, we can decrease the objective value by increasing the signal quality factors $\bm \theta$.
Meanwhile, both terms $\big({\eta[n]|h_{k}[n]|^2}/{(|h_{k}[n]|^2 +  \lambda^{\star}\eta^2[n])}\big)^2$ and 
$P_k|h_{k}[n]|^2$ in  \eqref{theta solution} monotonically increase with the increase of $|h_{k}[n]|^2$.
Based on the above analysis, it is found that we can further reduce the time-averaged MSE by increasing the sensors' channel power gains while still keeping  constraints \eqref{thetacons: peakP} and \eqref{thetacons: averageP} feasible.
\end{remark}
\subsection{UAV Trajectory Optimization}
Next, we  optimize the UAV trajectory $\bm q$ for given $\bm \theta$ and $\bm \eta$.  Problem $\mathscr{P}$ is reduced to a feasibility checking problem, i.e.,
\begin{subequations} \label{q subproblem}
\begin{eqnarray}
\hspace{-1.5em} \mathop{\text{find}} && \bm q  \nonumber \\
\hspace{-1.5em}\text{subject to}
 && 0\leq \|\bm q[n] - {\bf w}_k[n]\|^2\leq \hat{P}_k[n],  \forall k,  \forall n, \label{qcons: peak p} \\
 && 0\leq\sum_{n=1}^N\theta_k[n]\|\bm q[n] - {\bf w}_k[n]\|^2\leq \tilde{P}_k,  \forall k, \label{qcons: average p}\\
 && \text{constriants} \ \eqref{qcons: speed}, \eqref{qcons: initial}, 
\end{eqnarray}
\end{subequations}
where $\hat{P}_k[n] = {\beta_0P_k}/{\theta_k[n]} - H^2$ and $\tilde{P}_k = N\beta_0\bar{P}_k - H^2\sum_{n=1}^{N}\theta_k[n]$.
It is not difficult to observe that problem $\eqref{q subproblem}$ is a convex QCQP feasibility detection problem. 
To obtain a  more efficient solution and help reduce the value of the objective function at each iteration, 
we further transform problem $\eqref{q subproblem}$ into an optimization problem with an explicit objective.
Intuitively,  if the optimized trajectory by solving  \eqref{q subproblem} achieves a strictly smaller value of weighted-sum distance between the UAV and sensor $k$ than the corresponding threshold $\tilde{P}_k$, the channel gain between the UAV and sensor $k$ can be further increased since the channel gain is monotonically increased as the  distance between the UAV and the sensor decreases.
In turn, as presented in \textbf{Remark \ref{theta remark}}  of Section III, as channel gains between the UAV and the sensors increase, the signal quality factors $\{\theta_k[n]\}$ in problem $\mathscr{P}_{1.2}$ can be increased while satisfying all the power constraints, thereby reducing the time-averaged MSE.
To this end, problem $\eqref{q subproblem}$ is transformed into the following weighted-sum distance (between the UAV and all sensors) minimization problem
\begin{subequations} 
	\begin{eqnarray}
\mathscr{P}_{1.3}:	\mathop{\text{minimize}}_{\bm q} &&\sum_{n=1}^N\sum_{k=1}^{K}\theta_k[n]\|\bm q[n] - {\bf w}_k[n]\|^2 \nonumber \\
		\text{subject to} 
		&& \text{constriants} \ \eqref{qcons: speed}, \eqref{qcons: initial},  \eqref{qcons: peak p}, \eqref{qcons: average p},\nonumber
	\end{eqnarray}
\end{subequations}
where we set the weights to be $\theta_k[n],\forall k, \forall n$. 
This is because, in the AirComp setup, the distances between the UAV and different sensors cannot be simultaneously minimized in general at any time slot, which thus need to be balanced with different weights.
Comparing problem $\mathscr{P}_{1.3}$ with problem $\eqref{q subproblem}$, it follows that the feasible set of problem $\mathscr{P}_{1.3}$ is the same as problem $\eqref{q subproblem}$. 
However,  $\mathscr{P}_{1.3}$ is more practically desired than problem $\eqref{q subproblem}$ in terms of the converged solution as the time-averaged MSE decreases more quickly with the number of iterations.
Although the convex QCQP problem  $\mathscr{P}_{1.3}$ can be solved using a general-purpose solver through interior-point methods,  to further reduce the computational complexity, we exploit the specific structure of problem $\mathscr{P}_{1.3}$ and find its optimal solution using an ADMM-based algorithm in the sequel.

To utilize the ADMM method to solve problem $\mathscr{P}_{1.3}$, we introduce some necessary auxiliary variables and transform the constraints such that problem $\mathscr{P}_{1.3}$ can be decoupled into several convex subproblems as follows.
To begin with, we define $\mathbf A_1$ as
\begin{eqnarray}
\mathbf A_1 =\begin{bmatrix}
-1&  1&  & 0 & \cdots  &  0  & 0\\
0 & -1& 1& 0 &\cdots   &  0  & 0\\
0 & 0 & 0 & 0 & \ddots &  1  & 0\\
0 & 0 & 0 & 0 & \cdots & -1  & 1
\end{bmatrix}\in \mathbb R^{N\times (N+1)}.\nonumber
\end{eqnarray}
By introducing an auxiliary variable $\bm z = [\bm z[1]^{\sf T}, \ldots, \bm z[N]^{\sf T}]^{\sf T}\in \mathbb{R}^{N\times 2}$ with $\bm z[n] = \bm q [n]-\bm q[n-1]\in \mathbb{R}^{1\times 2}, \forall n \in \mathcal{N}$, constraint \eqref{qcons: speed} is equivalently expressed as
\begin{eqnarray}
\mathbf A_1\mathbf q=\bm z, \bm z \in \mathcal{Z},
\end{eqnarray}
  where $\mathcal{Z} = \left\{\bm z\in \mathbb{R}^{N\times 2} \ \big| \ \|\bm z[n]\| \leq  V_{\max}\delta, \forall n \in \mathcal{N} \right \}$ represents the feasible set of $\bm z$.
Besides, constraint \eqref{qcons: initial}  can be equivalently expressed as
 \begin{eqnarray}
\mathbf A_2 \bm q = {\bf q}_I, 
\end{eqnarray}
where $\mathbf A_2 = [1, 0, 0, 0, \cdots, 0, 0] \in \mathbb R^{1\times (N+1)}$.
Similarly, by denoting
\begin{eqnarray}
\hspace{-3em} &&\mathbf{B}_{1,k}  =   \text{diag}\left(1,\sqrt{\theta_k[1]}, \ldots, \sqrt{\theta_k[N]}\right)\in\mathbb{R}^{(N+1)\times (N+1)}, \\
\hspace{-3em}&&\mathbf{B}_{2,k}  =  [\bm q_I^{\sf T}, \mathbf{w}_k^{\sf T}]^{\sf T} \in \mathbb{R}^{(N+1)\times 2},
 \end{eqnarray}
constraint \eqref{qcons: peak p} and constraint \eqref{qcons: average p} can respectively be equivalently expressed as
\begin{eqnarray}
&& \| \bm q[n] - \mathbf{B}_{2,k}[n] \|_2^2 \leq \hat{P}_k[n], \forall k \in \mathcal{K}, \forall n \in \mathcal{N}, \\
 &&  \|\mathbf{B}_{1,k} (\bm q - \mathbf{B}_{2,k}) \|_F^2 \leq \tilde{P}_k,  \forall k,
\end{eqnarray}
where $\mathbf{B}_{2,k}[n]\in \mathbb{R}^{1\times 2}$ denotes the $(n+1)$-th row of matrix $\mathbf{B}_{2,k}$.
To proceed, two sets of auxiliary variables $\{\bm \Gamma_k \in \mathbb{R}^{ (N+1)\times 2}, \forall k \in \mathcal{K}\}$  and $\{\bm V_{k}\in \mathbb{R}^{ (N+1)\times 2},\forall k \in \mathcal{K} \}$ are introduced
such that
\begin{eqnarray} 
    &&\bm \Gamma_k = \bm q, \forall k \in \mathcal{K}, \\
	&&\bm V_{k}      =  \bm B_{1,k}\bm q,  \forall k \in \mathcal{K}, 
\end{eqnarray}
where $\bm \Gamma_k $ is a copy of the original trajectory vector $\bm q$, and $\bm V_{k}$ represents the weighted trajectory vector.
To ease the notation, we define $\bm \Gamma \triangleq \big\{\bm \Gamma_{k}  \big|  \forall k \in \mathcal{K}  \big\}$, and  $\bm V \triangleq \big\{\bm V_{k}  \big|  \forall k \in \mathcal{K}  \big\}$.
Then problem $\mathscr{P}_{1.3}$ can be equivalently expressed as
\begin{subequations} \label{Problem q24}
\begin{eqnarray}
\hspace{-3em}\mathop{\text{minimize}}_{ \bm q,  \bm \Gamma, \bm V , \bm z}  && \sum_{k=1}^K\| \mathbf{B}_{1,k}\bm q - \mathbf{B}_{1,k}\mathbf{B}_{2, k} \|_F^2 \nonumber\\
\hspace{-3mm} \text{subject to} 
&&\bm \Gamma_{k}  = \bm q , \forall k \in \mathcal{K}, \label{q24006}\\
&&\bm V_{k}   =  \mathbf{B}_{1,k}\bm q,  \forall k \in \mathcal{K}, \label{q24007}\\
&& 0\leq \|\bm \Gamma_k[n] - \mathbf{B}_{2,k}[n]\|^2\leq \hat{P}_k[n],  \forall k \in \mathcal{K}, \forall n\in \mathcal{N}, \label{q24001} \\
&&\| \bm V_{k} - \mathbf{B}_{1,k}\mathbf{B}_{2, k} \|_F^2 \leq \tilde{P}_k,  \forall k \in \mathcal{K}, \label{q24002}\\
&&\mathbf{A}_1 \bm q = \bm z,  \label{q24004}\\
&& \bm z \in \mathcal{Z},\\
&&\mathbf{A}_2 \bm q = \mathbf{q}_I, \label{q24005}
\end{eqnarray}
\end{subequations}
where $\bm \Gamma_k[n]\in \mathbb{R}^{1\times 2}$ denotes the $(n+1)$-th row of matrix $\bm \Gamma_k$. 

We define the feasible regions of constraints \eqref{q24001},  \eqref{q24002}, and \eqref{q24005} as $\mathcal{C}$, $\mathcal{D}$, and $\mathcal{Q}$, respectively.
Thus, we obtain the equivalent ADMM reformulation of problem $\mathscr{P} _{1.3}$ as
\begin{eqnarray} \label{Problem ADMM q25}
\mathop{\text{minimize}}_{\bm \Gamma, \bm V, \bm q, \bm z} \hspace{-0mm}
&& \sum_{k=1}^K\| \mathbf{B}_{1,k}\bm q - \mathbf{B}_{1,k}\mathbf{B}_{2, k} \|_F^2
 + \ \mathbb{I}_{\mathcal{C}}(\bm \Gamma) +  \mathbb{I}_{\mathcal{D}}(\bm V)  + \mathbb{I}_{\mathcal{Z}}(\bm z)+ \mathbb{I}_{\mathcal{Q}}(\bm q)\nonumber  \\
\text{subject to} 
&& \text{constriants} \ \eqref{q24006}, \eqref{q24007}, \eqref{q24004},
\end{eqnarray}
where $\mathbb{I}_{\mathcal{X}}(\bm x)$ is  the indicator function for the feasible region of $\mathcal{X}$, which is given by
\begin{eqnarray}
\mathbb{I}_{\mathcal{X}}(\bm x) =\left\{
\begin{aligned}
 0,         &&\text{if} \ \bm x \in \mathcal{X},  \\
 +\infty,   &&\text{otherwise}.
\end{aligned}
\right.
\end{eqnarray}
Then, the augmented Lagrangian (using the scaled dual variables) of problem \eqref{Problem ADMM q25} is given by
\begin{eqnarray} \label{Lagrangian function}
 \mathcal{L}_\rho(\bm \Gamma, \bm V, \bm z, \bm q, \bm \lambda, \bm \xi, \bm \tau) 
& \!\!\!\!\!\!=\!\!\!\! \!\!\!&\sum_{k=1}^K\| \mathbf{B}_{1,k}\bm q - \mathbf{B}_{1,k}\mathbf{B}_{2, k} \|_F^2
 + \ \mathbb{I}_{\mathcal{C}}(\bm \Gamma) +  \mathbb{I}_{\mathcal{D}}(\bm V)+ \mathbb{I}_{\mathcal{Z}}(\bm z) + \mathbb{I}_{\mathcal{Q}}(\bm q) \nonumber \\
&&{\!\!\!\! \!\!\!\!\!\!\! \!\!\!\!\!\!\! \!\!\!\!\!\!\! \!\!\!\!\!\!\! \!\!\!\!\!\!\! \!\!\!}+  \frac{\rho_1}{2} \sum_{k = 1}^K \| \bm\Gamma_k - \bm q +   \bm \lambda_k\|_F^2  
  +  \frac{\rho_2}{2} \sum_{k=1}^{K} \| \bm V_{k} - \bm B_{1,k}\bm q  + \bm \xi_{k}\|_F^2 
 + \frac{\rho_3}{2} \|\bm z -\mathbf{A}_1 \bm q + \bm \tau\|_F^2,
\end{eqnarray}
where $\rho_1, \rho_2$, and $\rho_3$ are the penalty parameters,  and $\bm \lambda \triangleq \big\{\bm \lambda_{k} \in \mathbb{R}^{(N+1)\times 2}\big|  k \in \mathcal{K} \big\}$, $\bm \xi \triangleq \big\{\bm \xi_{k} \in \mathbb{R}^{(N+1)\times 2}\big|  k \in \mathcal{K}  \big\}$, and $\bm \tau \in \mathbb{R}^{N \times 2}$ are the dual variables for constraints \eqref{q24006}, \eqref{q24007}, \eqref{q24004}, respectively.
\begin{algorithm}[t]
	\caption{ADMM for Solving Problem  $\mathscr{P}_{1.3}$}
	\begin{algorithmic}[1]
		\STATE {\textbf{Input}}:  The penalty parameters $\{\rho_1, \rho_2,\rho_3\}$.
		\STATE {{Initialize}}:  $\bm q^0 \leftarrow \bm q^i$, $\bm \lambda^0\leftarrow \bm 0$, $\bm \xi^0 \leftarrow \bm 0$, $\bm \tau^0\leftarrow \bm 0$.  Let $j = 0$. 
		\REPEAT		
		\STATE	Update the first block of variables $\{\bm \Gamma, \bm V, \bm z\}$	
		\begin{equation*} \label{first block}	
		\{\bm \Gamma^{j+1}, \bm V^{j+1}, \bm z^{j+1}\}:= \argmin_{\bm \Gamma, \bm V, \bm z}\mathcal{L}_\rho(\bm \Gamma, \bm V, \bm z, \bm q^j, \bm \lambda^j, \bm \xi^j, \!\bm \tau^j). 
		\end{equation*} 
		\STATE Update the second block of variables $\bm q$	
		\begin{equation*}\label{second block}	
			\bm q^{j+1}:= \argmin_{\bm q}\mathcal{L}_\rho(\bm \Gamma^{j+1}, \bm V^{j+1}, \bm z^{j+1}, \bm q, \bm \lambda^j, \bm \xi^j, \bm \tau^j ). 
		\end{equation*} 
		\STATE Update the dual variables based on expressions  \eqref{dual 1}, \eqref{dual 2}, and \eqref{dual 3}.
		\STATE Set $j = j+1$.
		\UNTIL convergene criterion is met.
		\STATE {\textbf{Output}}:  $\{\bm \Gamma, \bm V, \bm z, \bm q, \bm \lambda, \bm \xi, \bm \tau\}$. 
	\end{algorithmic}
	\label{algo2.3}
\end{algorithm}  

According to  \eqref{Lagrangian function}, we observe that the primal variables can be split into two blocks, i.e., $\{\bm \Gamma, \bm V, \bm z  \}$ and $\bm q$, and the objective function is  also separable along with this splitting. 
Therefore, by adopting the ADMM, we can minimize $\mathcal{L}_\rho(\bm \Gamma, \bm V, \bm z, \bm q, \bm \lambda, \bm \xi, \bm \tau, \bm \eta )$ by alternately updating the two blocks of primal variables.
Note that the first block of variables $\{\bm \Gamma, \bm V, \bm z  \}$ can be decomposed into three independent problems, which are expressed as follows.
\begin{eqnarray} 
&&\bm \Gamma^{j+1}:= \argmin_{\bm \Gamma} \bigg\{  \mathbb{I}_{\mathcal{C}}(\bm \Gamma) + \frac{\rho_1}{2} \sum_{k=1}	\| \bm\Gamma_k - \bm q^j +   \bm \lambda^j\|_F^2\bigg\}, \label{ADMM Gamma}\\
&& \bm V^{j+1} := \argmin_{\bm V} \bigg\{  \mathbb{I}_{\mathcal{D}}(\bm V) + \frac{\rho_2}{2} \sum_{k=1}^{K} \big\| \bm V_{k} - \mathbf{B}_{1,k}\bm q^j + \bm \xi_{k}^j \big\|_F^2 \bigg\}, \label{ADMM V} \\ 
&&\bm z^{j+1} :=  \argmin_{\bm z} \bigg\{ \mathbb{I}_{\mathcal{Z}}(\bm z) 
+\frac{\rho_3}{2} \big\| \bm z - \mathbf{A}_1 \bm q^j + \bm \tau^j \big\|_F^2 \bigg\}. \label{ADMM z}
\end{eqnarray}
In the $j$-th iteration, given $\{ \bm q^j, \bm \lambda^j, \bm \xi^j, \bm \tau^j\}$,  the details of updating each variable are explained as follows.
\subsubsection{$\bm \Gamma$ Update}  Problem \eqref{ADMM Gamma} for updating  $\bm \Gamma$  is equivalent to 
\begin{eqnarray}\label{Gamma Update}
	\mathop{\text{minimize}}_{ \bm\Gamma} 
 && \sum_{k=1}^K\| \bm\Gamma_k - \bm q^j +   \bm \lambda_k^j\|_F^2\nonumber \\
\text{subject to} 
&& 0\leq \|\bm \Gamma_k[n] - \mathbf{B}_{2,k}[n] \|^2\leq \hat{P}_k[n],  \forall k,  \forall n. 
\end{eqnarray}
One can  observe that problem \eqref{Gamma Update} can be decomposed into $KN$ convex subproblems, each of which is a QCQP with only one constraint (QCQP-1), and thus  is efficiently solvable. 
Specifically, one for each $n \in \mathcal{N}$, $k \in \mathcal{K}$ is given by
\begin{eqnarray}\label{kn-th Gamma Update}
	\mathop{\text{minimize}}_{ \bm\Gamma_k[n]} 
&&  \| \bm\Gamma_k[n] - \bm q^j[n] + \bm \lambda_k^j[n]\|^2\nonumber \\
\text{subject to} 
&& 0\leq \|\bm \Gamma_k[n] - \mathbf{B}_{2,k}[n] \|^2\leq \hat{P}_k[n],
\end{eqnarray}
where $\bm \lambda_k^j[n]\in \mathbb{R}^{1\times 2}$ denotes the $(n+1)$-th row of matrix $\bm \lambda_k^j$. 
 Problem
\eqref{kn-th Gamma Update} can be viewed as the Euclidean projection of the point
$\bm q^j[n] - \bm \lambda_k^j[n]$
onto an Euclidean ball, centered at the point $\mathbf{B}_{2,k}[n]$  with
radius of $\sqrt{\hat{P}_k[n]}$.
It is easily verified that the optimal solution of  subproblem \eqref{kn-th Gamma Update} is given as the following closed form
\begin{eqnarray}
\bm \Gamma_k[n] =\left\{
\begin{aligned}
 & \mathcal{P}_{\mathcal{C}} \big(\bm q^j[n] -\bm \lambda^j[n] - \mathbf{B}_{2,k}[n]\big) + \mathbf{B}_{2,k}[n],  k \in \mathcal{K},  n \in \mathcal{N},  \\
 &\bm q^{j}[n],  n = 0,
\end{aligned}
\right.
\end{eqnarray}
where $\mathcal{P}_{\mathcal{C}}(\bm x_k[n]) := \min\big\{{\sqrt{\hat{P}_k[n]}}\big/{\|\bm x_k[n]\|},1 \big\}\bm x_k[n] $ denotes the projector associated with the space $\mathcal{C}$.

\subsubsection{$\bm V$ Update} Problem \eqref{ADMM V} for updating  $\bm V$ is equivalent to the following problem
\begin{eqnarray}\label{V Update}
	\mathop{\text{minimize}}_{ \bm V} 
  &&\sum_{k=1}^{K} \big\| \bm V_{k} - \mathbf{B}_{1,k}\bm q^j + \bm \xi_{k}^j \big\|_F^2 \nonumber \\
    \text{subject to} 
    &&  \|\bm V_{k} - \mathbf{B}_{1,k}\mathbf{B}_{2,k}\|_F^2\leq \tilde{P}_k,  \forall k.
\end{eqnarray}
It is observed that problem \eqref{V Update}  can be decomposed into $K$ QCQP-1 subproblems. One for each  $k \in \mathcal{K}$ is given by
\vspace{-2mm}
\begin{eqnarray}\label{k-v Update}
	\mathop{\text{minimize}}_{\bm V_{k}} 
	&& \big\| \bm V_{k} - \mathbf{B}_{1,k}\bm q^j +  \bm \xi_{k}^j \big\|_F^2 \nonumber \\
	\text{subject to} 
    && \|\bm V_{k} - \mathbf{B}_{1,k}\mathbf{B}_{2,k}\|_F^2\leq \tilde{P}_k. 
\end{eqnarray}
Similarly to problem \eqref{kn-th Gamma Update}, the optimal solution of problem \eqref{k-v Update} is given by
\begin{eqnarray}
	\bm V_{k} = \mathcal{P}_{\mathcal{D}} \big(  \mathbf{B}_{1,k}\bm q^j - \bm \xi_{k}^j - \mathbf{B}_{1,k}\mathbf{B}_{2,k}\big) + \mathbf{B}_{1,k}\mathbf{B}_{2,k},
\end{eqnarray}	
where $\mathcal{P}_{\mathcal{D}}(\bm X_k) := \min\big\{{\sqrt{\tilde{P_k}}}\big/{ \| \bm X_k\|_F },1 \big\}\bm X_k $ denotes the projector associated with  $\mathcal{D}$.

\subsubsection{$\bm z$ Update} The update of $\bm z$ in problem \eqref{ADMM z} is equivalent to solving the following problem
\begin{eqnarray}\label{z Update}
	 \mathop{\text{minimize}}_{ \bm z} 
	 &&\big\| \bm z - \mathbf{A}_1 \bm q^j  + \bm \tau^j \big\|^2 \nonumber \\
	 \text{subject to} 
     &&  \|\bm z[n]\| \leq V_{\max}\delta,  \forall n \in \mathcal{N}.
\end{eqnarray}
Problem \eqref{z Update} can also be decomposed into $N$ QCQP-1 subproblems. One for each  $n \in \mathcal{ N}$ is given by
\vspace{-4mm}
\begin{eqnarray}\label{n-z Update}
	\mathop{\text{minimize}}_{ \bm z[n]} && \| \bm z[n] - \bm q^j[n] + \bm q^j[n-1]   + \bm \tau^j[n]\|^2 \nonumber \\
	\text{subject to} 
&&  \|\bm z[n]\| \leq V_{\max}\delta, 
\end{eqnarray}
where $\bm \tau[n]$ is the $n${-}th row of $\bm \tau$.
 
Similarly to problem \eqref{kn-th Gamma Update}, the optimal solution of problem \eqref{n-z Update} is obtained as
\begin{eqnarray}
	\bm z[n] = \mathcal{P}_{\mathcal{Z}}(\bm q^j[n] - \bm q^j[n-1]- \bm \tau^j[n]),
\end{eqnarray}
where $\mathcal{P}_{\mathcal{Z}}(\bm x) := \min\big\{{V_{\max}\delta}\big/{ \| \bm q^j[n] - \bm q^j[n-1]- \bm \tau^j[n]\|_2 },1 \big\}\bm x $ denotes the projector associated with the space $\mathcal{Z}$.
\begin{algorithm}[t]
	\caption{Proposed  BCD-ADMM Algorithm for Solving Problem  $\mathscr{P}$}
	\begin{algorithmic}[1]
		\STATE {\textbf{Input}}:  $T$, $K$, $\{P_k\}$, $\{\bar{P}_k\}$, accuracy  $\epsilon$
		\STATE {{Initialize}}:  trajectory $ \bm q^0$  and  $\bm \theta^0$.  Let $i = 0$. 
		\REPEAT
		\STATE Set $i = i+1$.
		\STATE	Given  $ \bm q^{i-1}$ and $\bm \theta^{i-1}$, solve  $\mathscr{P}_{1.1}$  to  update   $\bm \eta^{i}$  based on  \eqref{solution denoise}.
		\STATE Given $ \bm q^{i-1}$ and $\bm \eta^{i}$, solve  $\mathscr{P}_{1.2}$  to update $\bm \theta^{i}$  based  on \eqref{theta solution}.
		\STATE Given  $\bm \eta^{i}$ and $\bm \theta^{i}$, solve  $\mathscr{P}_{1.3}$ to update $\bm q^{i}$  based on Algorithm \ref{algo2.3}.
		\UNTIL The relative decrease  $\frac{\overline{\sf {MSE}}^{i-1}-\overline{\sf {MSE}}^i}{\overline{\sf {MSE}}^i} < \epsilon$.
		\STATE {\textbf{Output}}:  $\bm \eta$, $\bm \theta$, and $\bm q$. 
	\end{algorithmic}
	\label{algo1}	
\end{algorithm}  

\subsubsection{$\bm q$ Update}
The update of  $\bm q$ is rewritten as the following linearly constrained QP problem
\begin{eqnarray}\label{q Update}
\mathop{\text{minimize}}_{ \bm q}&& 
	\sum_{k=1}^K\| \mathbf{B}_{1,k}\bm q - \mathbf{B}_{1,k}\mathbf{B}_{2, k} \|_F^2 
+\frac{\rho_1}{2}\sum_{k = 1}^K \| \bm\Gamma_k - \bm q +   \bm \lambda_k\|_F^2  \nonumber \\ 
 && +\frac{\rho_2}{2} \sum_{k=1}^{K} \| \bm V_{k} -\bm B_{1,k}\bm q  + \bm \xi_{k}\|_F^2 
 +\frac{\rho_3}{2} \|\bm z - \mathbf{A}_1 \bm q + \bm \tau\|_F^2 \nonumber \\
 \text{subject to} &&
      \mathbf{A}_2 \bm q = {\bf q}_I.
\end{eqnarray}
The solution to this least square problem with an affine constraint can be obtained based on the orthogonal projection onto an affine subspace, whose closed form is given by
\begin{eqnarray} 
&&\!\!\!\!\!\!\!\!\!\!\!\bm J =  \sum_{k=1}^K\Big[2\mathbf{B}^{\sf T}_{1,k}\mathbf{B}_{1,k}\mathbf{B}_{2,k} + \rho_1\big(\bm \Gamma_k^{j+1} + \bm \lambda_k^{j}\big) \Big] 
+ \sum_{k=1}^K\rho_2\mathbf{B}^{\sf T}_{1,k}\big(\bm V_k^{j+1} + \bm \xi_k^{j}\big) +\rho_3\mathbf{A}^{\sf T}_1\big(\bm z^{j+1} + \bm \tau^{j}\big),  \\
&&\!\!\!\!\!\!\!\!\!\!\!\bm F= {\rho_1K\bm I + \sum_{k=1}^K(\rho_2 + 2)\mathbf{B}^{\sf T}_{1,k}\mathbf{B}_{1,k} + \rho_3\mathbf{A}^{\sf T}_1\mathbf{A}_1}, \\
&&\!\!\!\!\!\!\!\!\!\!\!\bm q =  \big(\bm I - \mathbf{A}_2^{\sf T}\mathbf{A}_2\big)\big({\bm F}^{-1}{\bm J}\big)+ \mathbf{A}_2^{\sf T}{\bf q}_I. \label{q solution}     
\end{eqnarray}
\subsubsection{Lagrange Multipliers Update}
The scaled dual variables are updated as 
\begin{eqnarray}
        &&\bm \lambda_{k}^{j+1} :=  \bm \lambda_{k}^{j} +\bm\Gamma_{k}^{j+1} -\bm q^{j+1}, \forall k\in\mathcal{K}, \label{dual 1}\\   
      &&\bm \xi_{k}^{j+1}       := \bm \xi_{k}^{j} + \bm V_{k}^{j+1} - \mathbf{B}_{1,k} \bm q^{j+1}, \forall k\in\mathcal{K}, \label{dual 2} \\ 
      && \bm \tau^{j+1}          := \bm \tau^{j} +  \bm z^{j+1}- \mathbf{A}_1 \bm q^{j+1}\label{dual 3}.
     \end{eqnarray}


Up to now, the closed-form expressions for all the variables updating have been derived. 
To be specific, the proposed ADMM algorithm for solving problem $\mathscr{P}_{1.3}$ is summarized in Algorithm \ref{algo2.3}.
Note that Algorithm \ref{algo2.3} is guaranteed to converge to an  optimal solution of the convex problem $\mathscr {P}_{1.3}$ for any initial point with the convergence rate of $\mathcal{O}(1/j)$\cite{Bertsekas1997}.
For Algorithm \ref{algo2.3},  the dominant computation is the matrix inversion for updating $\bm q$ in expression \eqref{q solution}, whose complexity is $\mathcal{O}((N+1)^3)$. 
However, this operation only needs to be computed once during the iterative procedure of  Algorithm \ref{algo2.3}. 

\vspace{-2mm}
\subsection{Convergence and Complexity Analysis}
In the proposed BCD-ADMM method, we solve problem $\mathscr{P}$ by solving  $\mathscr{P}_{1.1}$, $\mathscr{P}_{1.2}$, and $\mathscr{P}_{1.3}$ alternately until convergence,  whose details are summarized in Algorithm $\ref{algo1}$.
Note that the solution $\bm q$ obtained in each iteration is used as the initial point of the next iteration for Algorithm $\ref{algo2.3}$. 
The convergence of Algorithm $\ref{algo1}$ is proved by the following proposition.
\begin{proposition}\label{convergence proposition}
The objective value of Problem $\mathscr{P}$ decreases as the number of iteration increases until convergence by applying Algorithm $\ref{algo1}$.	
\end{proposition}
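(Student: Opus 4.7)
The plan is to exploit a key structural observation: the MSE objective in $\mathscr{P}$ depends explicitly only on $\bm\theta$ and $\bm\eta$, while the UAV trajectory $\bm q$ enters the problem solely through the per-slot channel gains $|h_k[n]|^2$ appearing in the peak and average power constraints \eqref{thetacons: peakP}--\eqref{thetacons: averageP}. Write the objective as $F(\bm\theta,\bm\eta)$ and let $\bm\Theta(\bm q)$ denote the $\bm\theta$-feasible set induced by $\bm q$ through those two constraints. It then suffices to establish, for every outer iteration $i\to i+1$, the one-step inequality $F(\bm\theta^{i+1},\bm\eta^{i+1})\leq F(\bm\theta^i,\bm\eta^i)$ together with an invariant asserting that the BCD iterates remain mutually feasible.

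First I would handle the $\bm\eta$ and $\bm\theta$ steps. Step~5 returns $\bm\eta^{i+1}$ as the closed-form global minimizer of the convex separable quadratic subproblem $\mathscr{P}_{1.1}$ via \eqref{solution denoise}, yielding $F(\bm\theta^i,\bm\eta^{i+1})\leq F(\bm\theta^i,\bm\eta^i)$. Step~6 returns $\bm\theta^{i+1}$ as the global minimizer of the convex QP $\mathscr{P}_{1.2}$ over $\bm\Theta(\bm q^i)$ via the KKT characterization \eqref{theta solution}; since $\bm\theta^i\in\bm\Theta(\bm q^i)$ by the invariant discussed below, we obtain $F(\bm\theta^{i+1},\bm\eta^{i+1})\leq F(\bm\theta^i,\bm\eta^{i+1})$. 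Finally, Step~7 updates $\bm q^{i+1}$ via Algorithm~\ref{algo2.3}, and this step leaves $F$ unchanged because $F$ does not explicitly depend on $\bm q$.

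The main obstacle is verifying the feasibility invariant, i.e., that after the trajectory update one still has $\bm\theta^{i+1}\in\bm\Theta(\bm q^{i+1})$, so that the next outer pass is well-defined and the chain of one-step inequalities actually composes. I would verify this by inspecting the constraint set of $\mathscr{P}_{1.3}$: the inequalities $\|\bm q[n]-\mathbf w_k[n]\|^2\leq\hat P_k[n]=\beta_0 P_k/\theta^{i+1}_k[n]-H^2$ and $\sum_n\theta^{i+1}_k[n]\|\bm q[n]-\mathbf w_k[n]\|^2\leq\tilde P_k$ are, under the free-space model $|h_k[n]|^2=\beta_0/(H^2+\|\bm q[n]-\mathbf w_k[n]\|^2)$, algebraically equivalent to \eqref{thetacons: peakP}--\eqref{thetacons: averageP} evaluated at the pair $(\bm\theta^{i+1},\bm q^{i+1})$. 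Since Algorithm~\ref{algo2.3} returns a feasible (indeed optimal) point of the convex problem $\mathscr{P}_{1.3}$, we have $\bm\theta^{i+1}\in\bm\Theta(\bm q^{i+1})$, closing the induction from the standing initialization $\bm\theta^0\in\bm\Theta(\bm q^0)$.

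Combining the three one-step inequalities shows that the scalar sequence $\{F(\bm\theta^i,\bm\eta^i)\}_{i\geq 0}$ is monotonically non-increasing. Because $F$ is a sum of squared real quantities plus the nonnegative noise contribution $\sigma^2/\eta^2[n]$, it is bounded below by zero, so any monotone sequence bounded below in $\mathbb R$ converges, which establishes the claimed monotone descent until convergence.
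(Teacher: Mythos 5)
Your proposal is correct and follows essentially the same route as the paper's own proof: a chain of one-step descent inequalities from the exact optimality of the $\bm\eta$- and $\bm\theta$-subproblem solutions, together with the observation that the objective is invariant under the $\bm q$-update. You are in fact slightly more thorough than the paper, since you explicitly verify the feasibility invariant linking the constraint sets of $\mathscr{P}_{1.2}$ and $\mathscr{P}_{1.3}$ and invoke boundedness below to conclude convergence, both of which the paper leaves implicit.
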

\begin{proof}
	Please refer to Appendix \ref{convergence proposition proof}.
\end{proof}

In the following, we investigate the complexity per iteration of  Algorithm \ref{algo1}.
Specifically, in step 5, the complexity for computing $\bm \eta$ is  $\mathcal{O}(N)$. 
In step 6, the complexity for computing $\bm \theta $ is $\mathcal{O}(KN)$. 
In step 7, the complexity for computing $\bm q$ mainly lies in computing expression \eqref{q solution}. 
Therein, the matrix inversion in \eqref{q solution} is computational intensive operation in Algorithm \ref{algo2.3}, with the complexity given by $\mathcal{O}((N+1)^3)$. 
It should be mentioned that the matrix inversion in \eqref{q solution} only needs to be computed once in Algorithm \ref{algo2.3}. 
In addition, the complexity for computing matrix multiplication  in \eqref{q solution}  is given by $\mathcal{O}((N+1)^2)$.
Since $N>K$  in this paper, the total complexity of our proposed algorithm is thus dominated by $\mathcal{O}((N+1)^2)$ in each iteration.
By comparison, the complexity in the conventional BCD-SCA method \cite{Fu2021UAV} for computing $\bm q$ is given by $\mathcal{O}(K^{1.5}N^{3.5})$ . In summary, the proposed BCD-ADMM algorithm significantly reduces the computational complexity in each iteration.

\begin{table*}[t]
	\centering
	\caption{Parameter settings for simulations.}\label{table:setupNumerical}
	{\renewcommand{\arraystretch}{1.2}\begin{tabular}{|p{3cm}|p{4cm}|p{4.5cm}|p{2.5cm}|}
			\hline
			Number of sensors & $K=50$  &	UAV altitude & $H=100$ m \\
			\hline
			Peak power budget & $P_A\!=\!10$ dBm, $P_B = 7$ dBm & Maximum UAV speed & ${V}_{\max}=20$ m/s			 \\
			\hline
			Average power budget & $\bar{P}_k = \frac{1}{2}P_k$ & Initial horizontal location of UAV & $\bm q_I = (200, 0) $ m
			\\
			\hline
			Noise power & $\sigma^2 = -80$ dBm & Channel gain at reference distance & $\beta_0=-40$ dB \\
			\hline
			Time slot length & $\delta = 0.2$ s &
			Accuracy of Algorithm \ref{algo1} &$\epsilon= 10^{-3}$
			\\
			\hline	 
			\end{tabular}}
		\vspace{-8mm}
\end{table*}
\section{Numerical Results}\label{NR section}
In this section, we present the numerical results to demonstrate the effectiveness of the proposed algorithm.
The service region of the UAV is limited to be a square area with the size of [0, 400] m $\times$  [0, 400] m. 
The UAV is assumed to fly at a fixed altitude of $H = 100$ m, which complies with the practical rule, i.e., the commercial UAVs  should not fly over 400 feet (122 m) \cite{FAA2016UAV}. 
Additionally, we consider a heterogeneous sensor network,
where $K=50$  sensors are separated into two clusters, i.e., cluster $A$ with 15 sensors and cluster $B$ with 35 sensors. 
The peak powers of sensors in  clusters $A$ and $B$ are denoted as $P_A$ and $P_B$, respectively. 
To investigate the performance of the BCD-ADMM algorithm, we consider a simplified mobility model to account for time-varying locations of the sensors, 
as in \cite{Zhang2021UAVmobility}.
In particular,  instead of considering a certain mobility model for individual sensors, we assign different traces for two clusters.
Following the trace, the center of each cluster continuously changes within the service region during mission duration $T$.
Specifically, the initial locations of the sensors in cluster $A$ and cluster $B$ are randomly and uniformly distributed in a circle centered at (50, 100) m and (350, 150) m with a radius of 50 m, respectively.
The cluster centers move at random directions and constant speeds that follow uniform distributions within the intervals  [0, $\pi$] and [1, 8] m/s, respectively.
Note that the proposed approach can be applied to other mobility models as long as the speeds of the sensors are lower than that of the UAV such that  the sensors' locations can be considered invariant within one time slot.

We compare the proposed BCD-ADMM algorithm with the following benchmarks.
\begin{itemize}
\item \textbf{BCD-SCA}: The BCD-SCA algorithm \cite{Fu2021UAV} jointly optimizes $\bm \eta$, $\bm p$, and $\bm q$ to solve Problem \eqref{original formulation}, wherein SCA  is adopted to approximate the non-convex trajectory design subproblem as a convex QCQP problem that can be efficiently solved by using modeling framework CVX and interior-point solvers (e.g., SDPT3).
\item Trajectory optimization without transmit power control (\textbf{TO w/o PC}): As considered in \cite{Fu2021UAV}, the TO w/o PC algorithm optimizes $\bm \eta$ and $\bm q$ with  constant transmit powers $\bm p$, i.e., $p_k[n] = \bar{P}_k, \forall k, \forall n$.
\item \textbf{Static UAV}:  In this scheme, the UAV is placed at the predetermined initial position (200, 0, 100) m and remains static. 
This scheme optimizes $\bm \eta$ and $\bm \theta$ by solving problems $\mathscr{P}_{1.1}$ and $\mathscr{P}_{1.2}$ iteratively until convergence.
\item Fly-hover with power control (\textbf{Fly-hover w/ PC}): The Fly-hover w/ PC algorithm designs the UAV trajectory in the following best-effort manner. 
If time allows, the UAV  flies straightly at its maximum speed to reach over the point of the sensors' geometric center of the last time slot and then remains static. 
Otherwise, it will fly straightly  at its maximum speed to reach over a point which is on the line between the UAV initial location and the sensors' geometric center of the last time slot.
Given the trajectory, the Fly-hover w/ PC algorithm optimizes $\bm \eta$ and $\bm \theta$ by solving problems $\mathscr{P}_{1.1}$ and $\mathscr{P}_{1.2}$ iteratively until convergence.
\end{itemize}
The initial transmit power  is $p_k[n] = \bar{P}_k, \forall k, \forall n$.
The initial trajectory for BCD-ADMM, BCD-SCA, and TO w/o PC are generated by using the Fly-hover w/ PC algorithm, unless specified otherwise.
For the BCD-based algorithms, the iterative procedure stops when the relative decrease of the objective is smaller than $10^{-3}$ (i.e., $(\overline{\sf{MSE}}^{i-1} - \overline{\sf{MSE}}^{i})/{\overline{\sf{MSE}}^i} \leq 10^{-3}$) or a maximum of 100 iterations is reached, where $\overline{\sf{MSE}}^i$ denotes the objective value of the $i$-th iteration. 
The penalty parameters in Algorithm \ref{algo2.3} are set as $\rho_1 = {8}/{\sqrt{K}}$, $\rho_2 = {8}/{\sqrt{K}}$, and $\rho_3 = {20}/{\sqrt{K}}$, which are empirically found to work well. 
The convergence criteria of ADMM are set as the absolute tolerance $\epsilon^{\rm abs} = 10^{-4}$ and relative tolerance $\epsilon^{\rm rel} = 10^{-4}$. 
Other parameters are summarized in Table \ref{table:setupNumerical} (if not specified otherwise).

\subsection{Convergence Behavior and Complexity Comparison}
\begin{figure}[t]
	\centering
	\begin{minipage}{.48\textwidth}
		\centering
		\includegraphics[width=8cm,height=6cm]{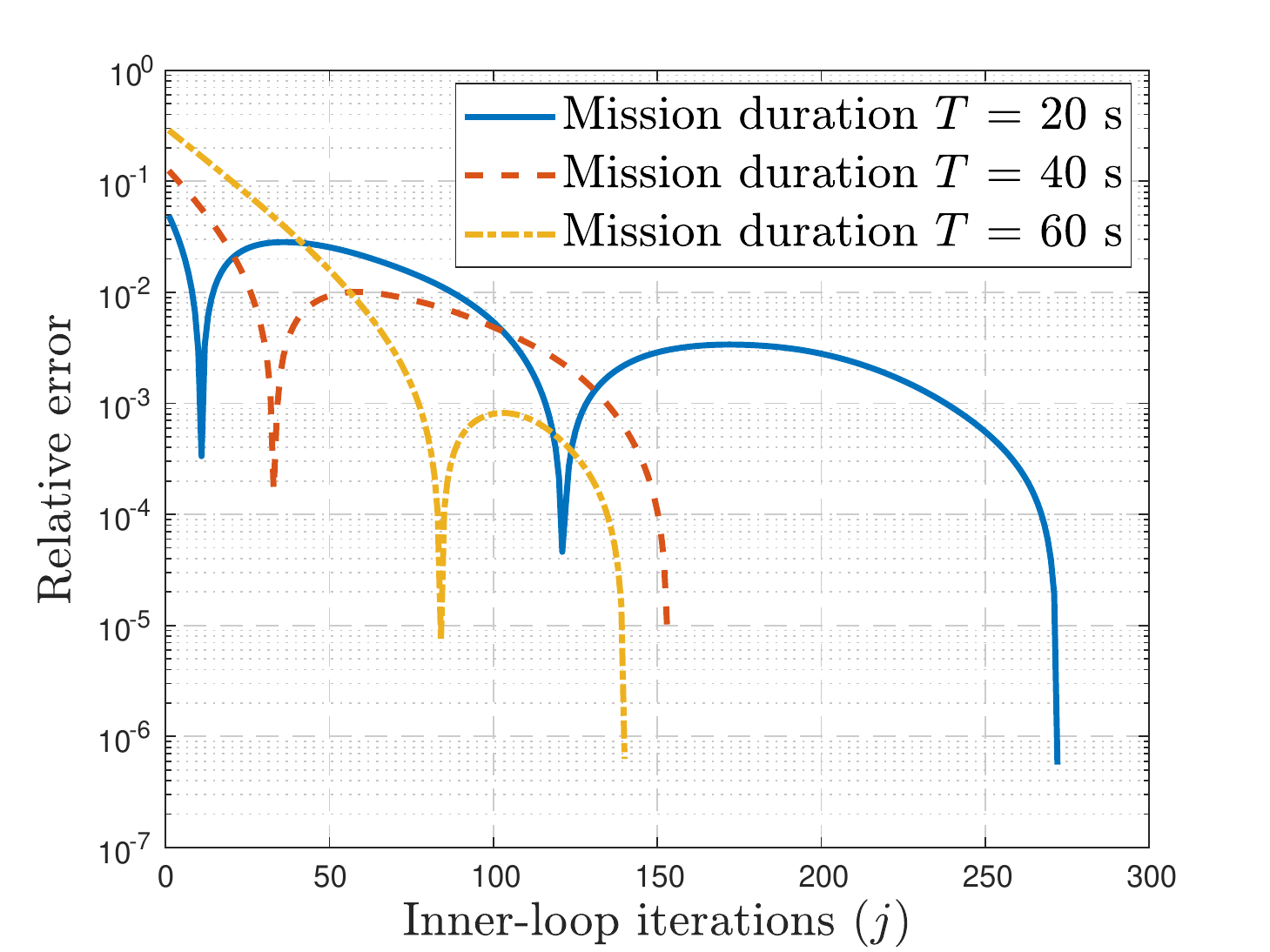}
		\vspace{-10mm}
		\caption{Convergence behavior of ADMM algorithm in the inner loop.}\label{fig:inner}
		\vspace{-8mm}
	\end{minipage}
	\hspace{4mm}
	\begin{minipage}{.48\textwidth}
		\centering
	\includegraphics[width=8cm,height=6cm]{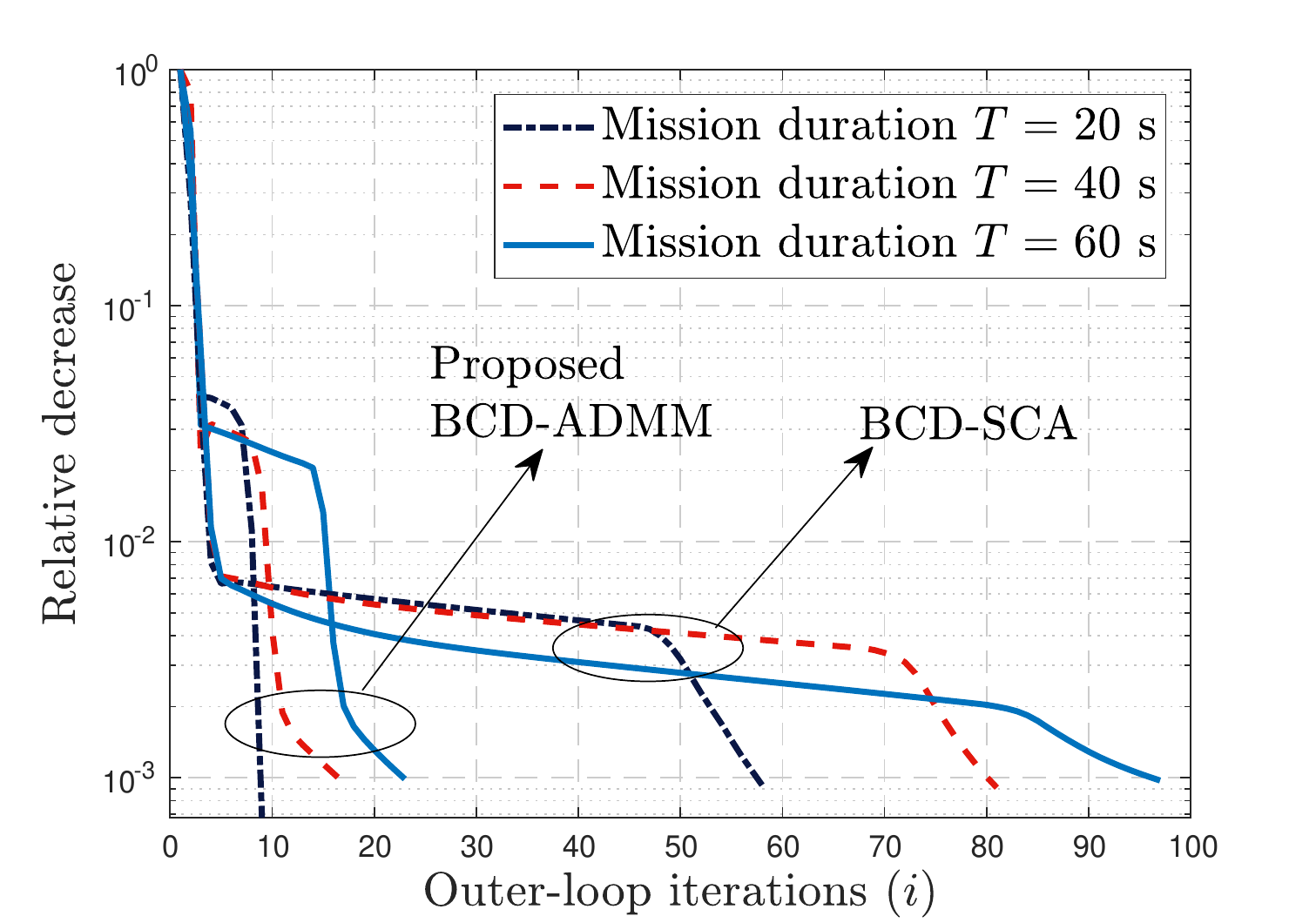}
		\vspace{-10mm}
			\caption{Convergence behavior of BCD algorithm in the outer loop.}\label{fig:outer}
		\vspace{-8mm}
	\end{minipage}
\end{figure}

We first demonstrate the convergence behavior of the proposed BCD-ADMM algorithm. 
The algorithm involves  an inner-loop iteration for ADMM to solve Problem $\mathscr P_{1.3}$ and an outer-loop iteration for BCD to solve Problem $\mathscr P$, whose convergences are illustrated in Fig. \ref{fig:inner} and Fig. \ref{fig:outer}, respectively.

In Fig. \ref{fig:inner},  the relative error in each iteration is defined as ${|a^j - a^\star|}/{a^\star}$, where $a^j$ is the objective value of the $j$-th iteration in Algorithm 1 and $a^\star$ is the optimal objective value of  problem $\mathscr P_{1.3}$ by using the interior-point method. 
From Fig. \ref{fig:inner},  it is seen that the proposed ADMM algorithm converges to a high accuracy solution, e.g., $10^{-5}$, within 300 iterations for various values of $T$. 
Note that the ADMM  is  not guaranteed to converge monotonically, since $\bm q^j$ is not generated to be feasible at each iteration  \cite{Bertsekas1997}.
From Fig. \ref{fig:outer}, it is observed that the proposed BCD-ADMM algorithm converges to a modest accuracy, e.g., $10^{-3}$, within 25 iterations for different values of $T$, while the BCD-SCA method needs about $60 \sim 100$ iterations to achieve the same level of accuracy.
This is because the proposed BCD-ADMM algorithm obtains the optimal solution of each subproblem, while the BCD-SCA method only optimizes the approximate lower bound of the trajectory subproblem by the SCA framework.

\begin{figure}[t]
	\centering
	\begin{minipage}{.48\textwidth}
		\centering
			\includegraphics[width=8cm,height=6cm]{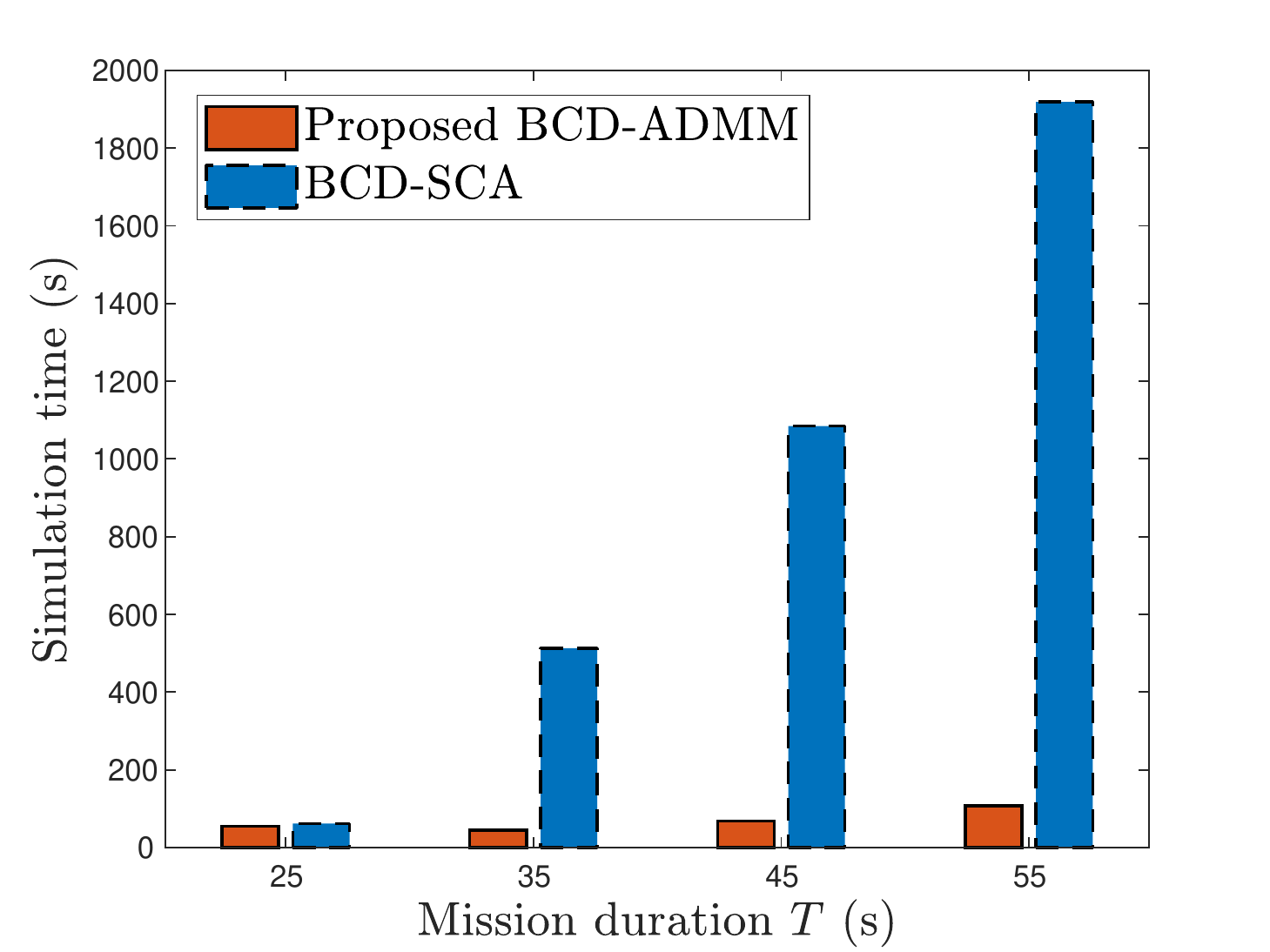}
		\vspace{-10mm}
		\caption{Simulation time versus the mission duration $T$.}\label{fig:timeT}
		\vspace{-8mm}
	\end{minipage}
	\hspace{4mm}
	\begin{minipage}{.48\textwidth}
		\centering
		\includegraphics[width=8cm,height=6cm]{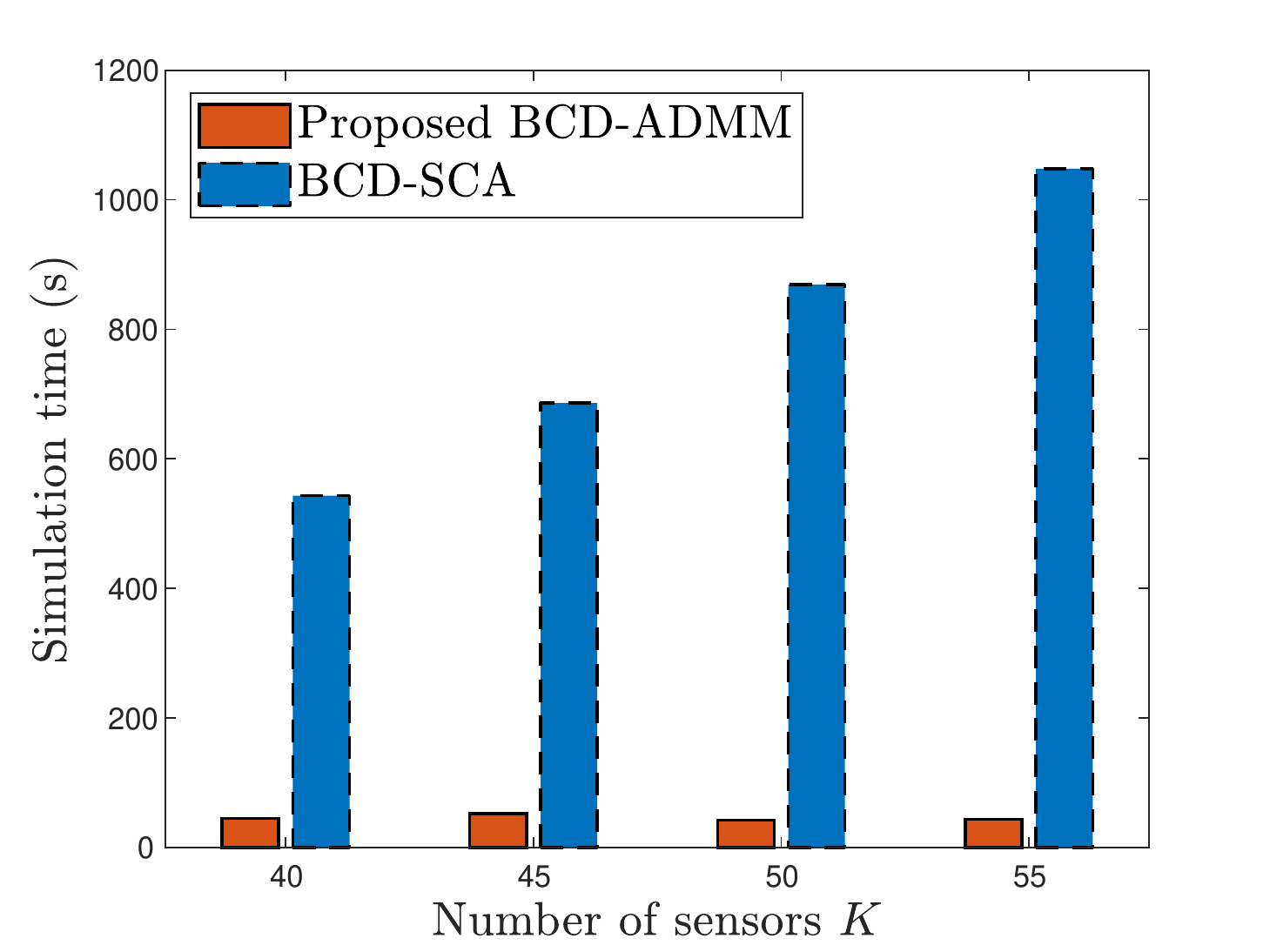}
		\vspace{-10mm}
			\caption{Simulation time versus the number of sensors $K$.}\label{fig:timeK}
		\vspace{-8mm}
	\end{minipage}
\end{figure}

%
Next, we compare the computational complexity of different methods in terms of the  simulation running time.
Fig. \ref{fig:timeT} shows  the simulation running times with different mission duration $T$. 
It is observed that as the mission duration $T$ increases, the running time of BCD-ADMM grows slowly, while that of BCD-SCA dramatically rises up.
Compared with the BCD-SCA method, the proposed BCD-ADMM method can speed up the running time about $10 \sim 35$ times.
This is because, for solving the trajectory optimization subproblem, the proposed ADMM only requires arithmetic operations rather than the interior-point solvers (e.g., SDPT3) in the BCD-SCA method.
Besides, the BCD-SCA method converges much slower than our proposed BCD-ADMM algorithm in the outer loop, as shown in Fig. \ref{fig:outer}.

Fig. \ref{fig:timeK}  illustrates the complexity comparison with different number of sensors $K$ when $T = 50$~s. Similar results as in the previous figure can be observed.
It is also seen that the running time of the proposed BCD-ADMM method is about $10 \sim 20$ shorter than the BCD-SCA method.

\subsection{Comparison of Different Trajectory Designs}

\begin{figure}[t]
	\centering
	\begin{minipage}{.48\textwidth}
		\centering
		\includegraphics[width=8cm,height=6cm]{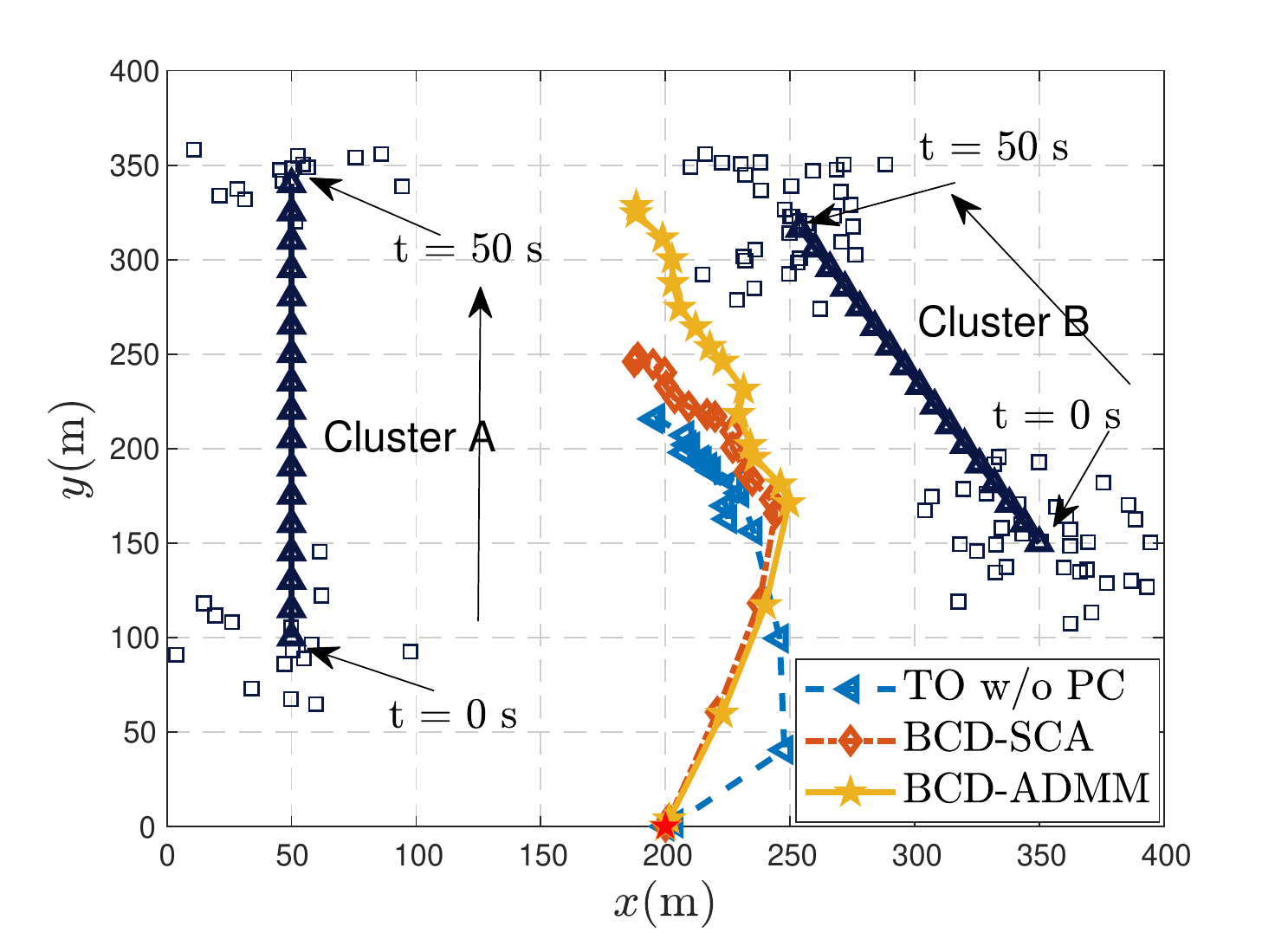}
		\vspace{-10mm}
			\caption{UAV trajectories for different algorithms when $T = 50$~s. }\label{fig:Traj}
		\vspace{-8mm}
	\end{minipage}
	\hspace{4mm}
	\begin{minipage}{.48\textwidth}
		\centering
			\includegraphics[width=8cm,height=6cm]{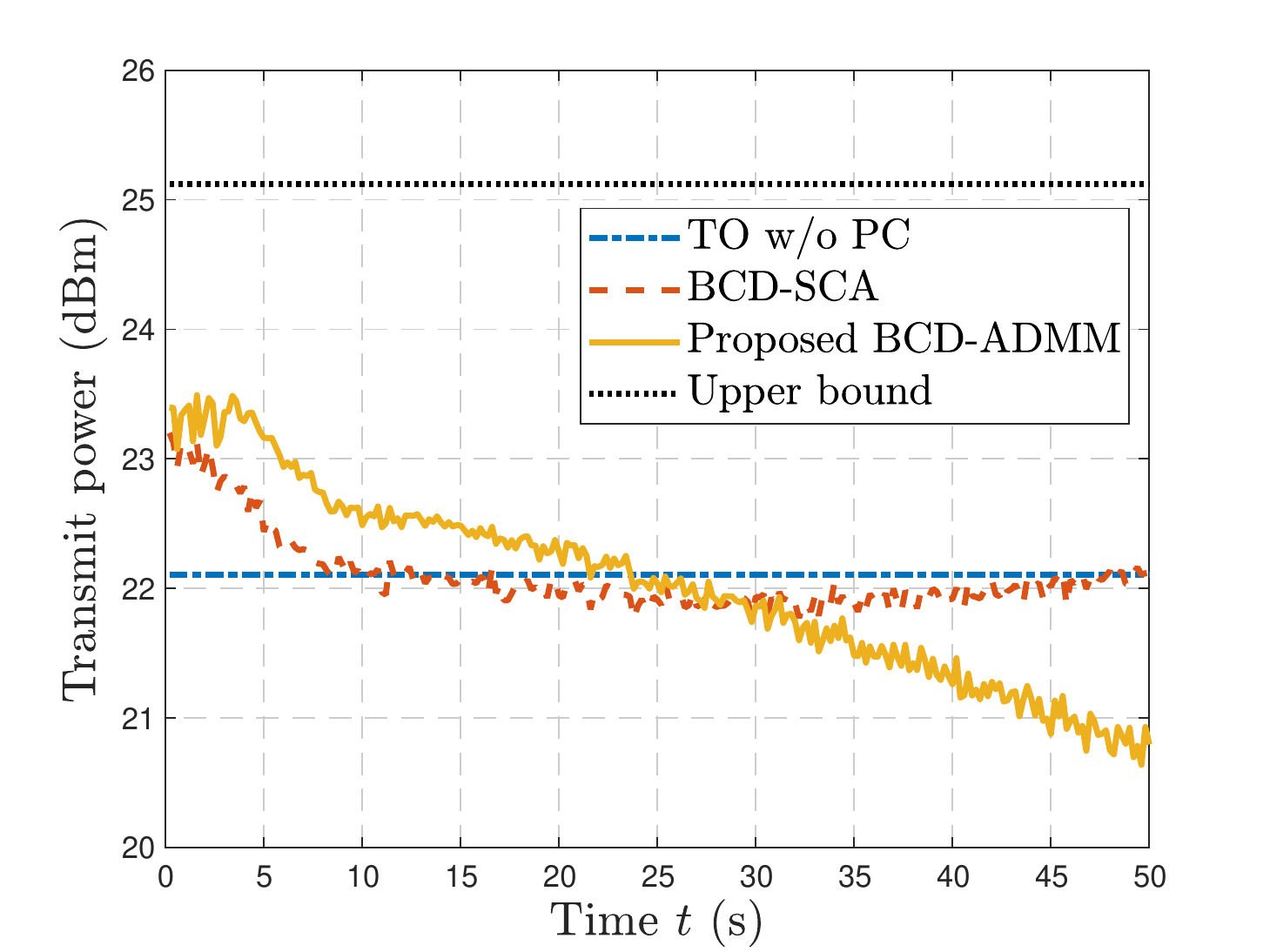}
		\vspace{-10mm}
	\caption{Total transmit power consumption for different algorithms. }\label{fig:tr_SP}
		\vspace{-8mm}
	\end{minipage}
\end{figure}

%

Fig. \ref{fig:Traj} shows the trajectories of the UAV by applying different methods when mission duration $T = 50$ s. 
Each trajectory is sampled every three seconds.
In this experiment,  cluster $A$ moves with a constant speed of $5$ m/s and at an angle of ${\pi}/{2}$, while cluster $B$ is moving with a constant speed of $5$ m/s and at an angle of ${2\pi}/{3}$.
Therein, navy blue solid lines with $\triangle$ represent the traces of the cluster centers (along which the sensor distribution is updated).
The  locations of sensors at $t = 0$~s and at $t = 50$~s are  marked by blue $\square$. The initial location of the UAV is marked by red  $\star$.
 
It is observed that for all algorithms, the UAV flies along an arc path.
The main reason for this arc path is that,  all links' channel conditions depend on the UAV's location at each time slot. 
The closer the UAV flies to one particular sensor, the farther it is away from some other sensors in general. 
It is also observed that the trajectories generated by the proposed BCD-ADMM algorithm, the benchmark TO w/o PC, and the BCD-SCA algorithm differ significantly. 
To unveil the difference in trajectories obtained by these algorithms and the reasons for such difference, we plot the corresponding sum transmit power of sensors over time slots, as shown in Fig. \ref{fig:tr_SP}. 
We observe that the proposed BCD-ADMM algorithm renders the UAV to move sufficiently close to the sensors to save their transmit power.
Furthermore,  the optimized trajectory is closer to cluster $B$ with a lower sensor transmit power budget ($P_B = 7$ dBm) as compared to cluster $A$ ($P_A = 10$ dBm).
This demonstrates that, for the case of unequal transmit power budgets, the proposed BCD-ADMM algorithm can efficiently strike a balance between minimizing communication distance and the sensors' transmit power, thereby enhancing the performance of AirComp.
However, for the benchmark TO w/o PC with constant transmit power, the UAV cannot perfectly track the sensors' mobility, resulting in a less effective trajectory.
Additionally, for $T<25$~s, the BCD-SCA algorithm decreases the sensors' transmit power, while the optimized trajectory of UAV is similar to that of BCD-ADMM.
However, when $T>25$~s, different from the BCD-ADMM algorithm, 
the BCD-SCA algorithm opts to increase the sensors' transmit power instead of shortening distances between the sensors and the UAV.
As a result, the BCD-SCA algorithm has to use a higher transmit power as compared to the BCD-ADMM algorithm.

\begin{figure}[t]
	\centering
	\begin{minipage}{.48\textwidth}
		\centering
		\includegraphics[width=8cm,height=6cm]{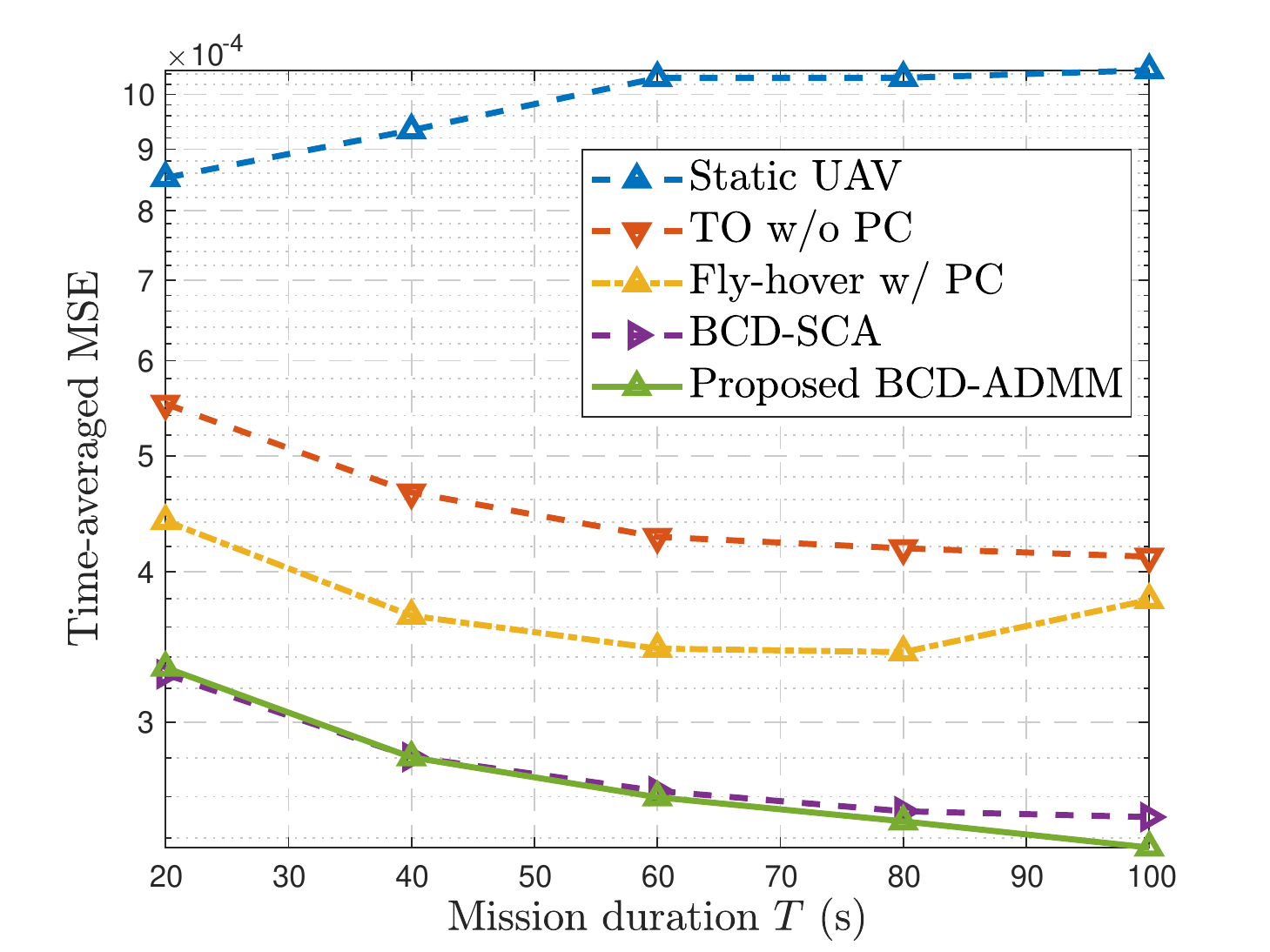}
		\vspace{-10mm}
		\caption{Time-averaged MSE versus mission duration $T$. }\label{fig:AlogT}
		\vspace{-8mm}
	\end{minipage}
	\hspace{1mm}
	\begin{minipage}{.48\textwidth}
		\centering
		\includegraphics[width=8cm,height=6cm]{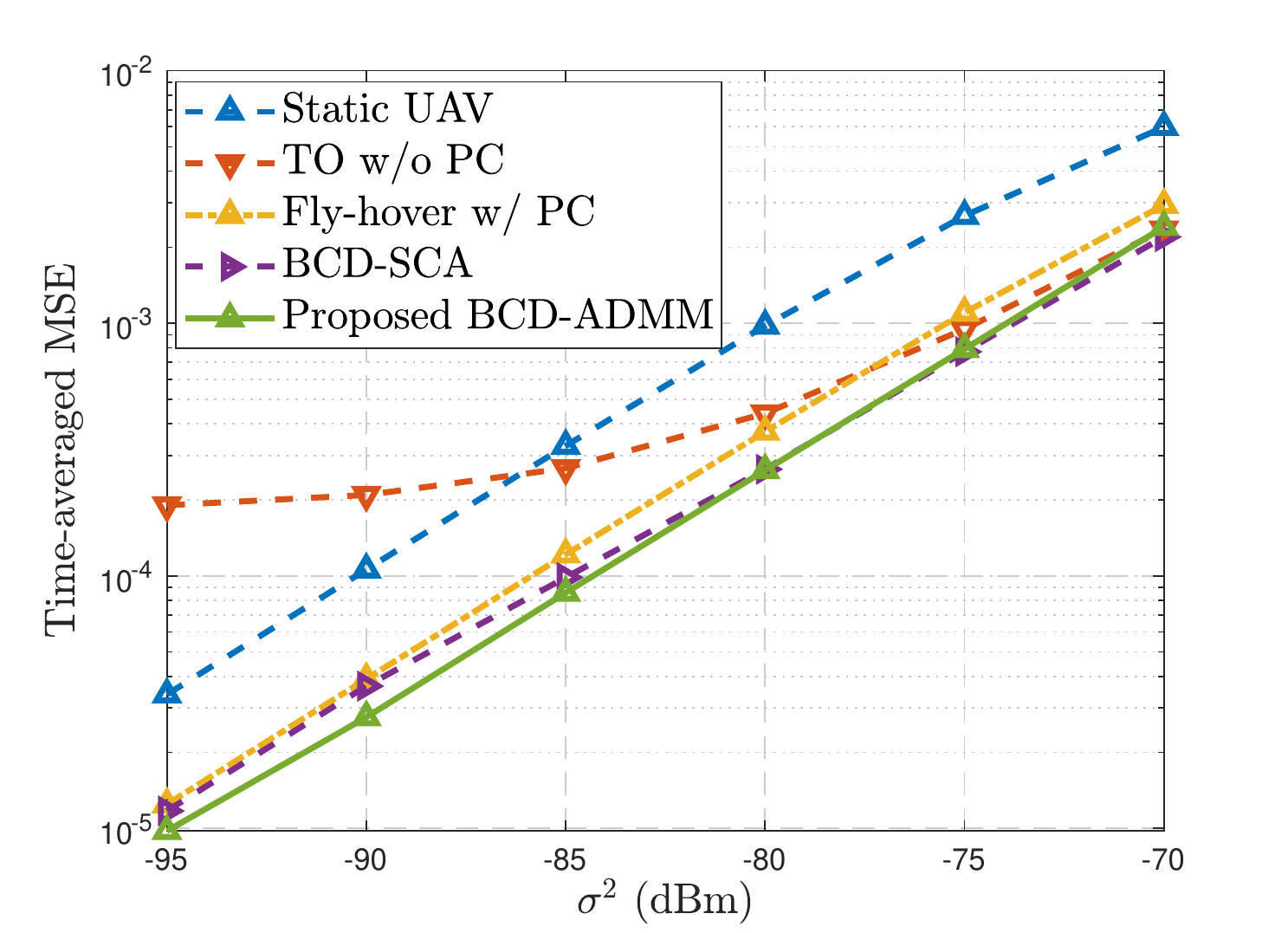}
		\vspace{-10mm}
		\caption{Time-averaged MSE versus noise power $\sigma^2$. }\label{fig:AlogNoise}
		\vspace{-8mm}
	\end{minipage}
\end{figure}

\subsection{Performance Comparison of Different Algorithms}
Fig. \ref{fig:AlogT} shows the  time-averaged MSE of different algorithms  versus different values of mission duration $T$. 
It is observed that the time-averaged MSE achieved by the static UAV scheme increases with $T$.
This is because the sensors move  farther away from their initial positions as $T$ increases, which in turn leads to worse  channel conditions.
In contrast, as shown in other four curves, by exploiting UAV mobility to track the movement of sensors, the time-averaged MSE  is reduced compared to the static UAV scenario.
Furthermore, the joint design schemes (i.e., BCD-SCA and BCD-ADMM) have a smaller time-averaged MSE than  the TO w/o PC algorithm.
This is because these joint design schemes strike a better balance between minimizing links' distances and sensors' transmit power by fully exploiting the synergy of trajectory design and power control, while the inefficient usage of transmit power in the TO w/o PC algorithm results in less efficient trajectory of the UAV which in turn degrades the performance of AirComp.
It can also be observed that the joint design schemes outperform the heuristic trajectory algorithm (i.e., Fly-hover w/ PC), especially when the mission duration is long.
All the above results illustrate the importance and necessity of the joint design in minimizing the time-averaged MSE for AirComp.
Additionally, it is found that the proposed BCD-ADMM algorithm further achieves  performance improvement. 
This is because  the proposed method obtains the optimal solution of each subproblem, while the BCD-SCA method only optimizes the approximate lower bound of the trajectory problem based on the SCA technique.

Fig. \ref{fig:AlogNoise} shows the robustness of different algorithms against noise  power when $T = 50$~s.
It is observed that the time-averaged MSE achieved by all algorithms rises up with the noise power since  transmit powers at the sensors are limited.
Interestingly, in a relatively lower noise power region, the TO w/o PC algorithm achieves the worst performance compared to other four algorithms with power control.
This is because when the noise power is small, a smaller normalizing factor is required for suppressing the noise-induced error, as demonstrated in \textbf{Remark \ref{eta remark}}  of Section III. 
Consequently, from \textbf{Proposition \ref{theta solution lemma}}, the required power for aligning the signals will be varied with normalizing factors. 
In contrast,  using a constant power strategy will cause a increased signal misalignment error.
These results suggest that for the case with a  relatively lower noise power, it is important to adopt a power control strategy for reducing the time-averaged MSE.
What's more, we can observe that the proposed algorithm outperforms other four benchmarks for different noise power values. 
This demonstrates that the proposed algorithm is robust to noise power variations.

\section{Conclusion} \label{C section}
We studied the time-averaged MSE minimization problem in  a UAV-aided AirComp system with mobile sensors, taking into account the UAV trajectory design, receive normalizing factors optimization at the UAV, and transmit power control at the sensors.
By introducing a novel variable transformation and applying the BCD technique, the equivalently reformulated problem can be reduced to a convex subproblem.
We derived the optimal closed-form expressions for intermediate variables and normalizing factors.
Furthermore,  the convex QCQP subproblem of trajectory design was reformulated in an ADMM form to reduce the computational complexity.
Simulation results demonstrated the superiority of the proposed low-complexity algorithm in minimizing time-averaged MSE and reducing the simulation time compared to the existing algorithms.
This initial investigation demonstrated the effectiveness of
deploying an UAV to assist AirComp for data aggregation applications.
For future studies, the joint design framework developed in this paper can be extended to more general scenarios with multiple cooperative UAVs and in the presence of ground BSs, while taking into account the prediction errors on the movement of sensors for practical implementation.

\appendix 
\vspace{-1mm}
\subsection{Proof of Proposition \ref{real proposition}}\label{real proposition proof}
For each $n\in \mathcal{N}$, $k\in \mathcal{K}$, each term $\big({b_k[n]h_{k}[n]}/{\eta[n]}-1\big)^2$ in \eqref{MSE_s_y_t} follows that
\begin{eqnarray}
\Bigg(\frac{b_k[n]h_{k}[n]}{\eta[n]}-1\Bigg)^2 &=& \Big|\frac{b_k[n]h_{k}[n]}{\eta[n]}\Big|^2+1-2\mathcal{R}\Big(\frac{b_k[n]h_{k}[n]}{\eta[n]}\Big)\nonumber\\
&\geq &|b_k[n]|^2|h_{k}[n]|^2/|\eta[n]|^2+1-2|b_k[n]||h_{k}[n]|/|\eta[n]|,
\end{eqnarray}
where the equality holds only when ${b_k[n]h_{k}[n]}/{\eta[n]}$ is real and non-negative.
Therefore, with any given  amplitudes of  $\{b_k[n]\}$, $\{\eta[n]\}$, and  $\{h_k[n]\}$, the $\overline{\sf MSE}$ attains the minimum only when each term ${b_k[n]h_{k}[n]}/{\eta[n]}, \forall n, \forall k$ is real and non-negative. This completes the proof.

\vspace{-4mm}
\subsection{Proof of Proposition \ref{theta solution lemma}}\label{theta solution lemma proof}
By setting the first derivative of 	$\mathcal{L}(\{\theta_k[n]\},  \{\alpha_n\}, \lambda)$ w.r.t. $\theta_k[n]$ to zero as follows
\begin{eqnarray}
\frac{\partial \mathcal{L}}{\partial{\theta_k[n]}} = \frac{1}{\eta^2[n]} + \frac{\alpha_n + \lambda}{|h_k[n]|^2} -  \frac{1}{\sqrt{\theta_k[n]}\eta[n]}
 =0,
\end{eqnarray}
we obtain
\vspace{-2mm}
\begin{eqnarray}\label{derivative solution}
\theta_k[n]=  \Big(\frac{\eta[n]|h_k[n]|^2}{|h_k[n]|^2 + (\alpha_n +  \lambda)\eta^2[n]}\Big)^2.
\end{eqnarray}

If  $\alpha_n > 0$ holds, the peak power constraint of sensor $k$ at time slot $n$ must be tight at the optimality due to the complementary slackness condition, i.e.,
\begin{eqnarray}\label{complementary}
\alpha_n\left(\frac{\theta_k[n]}{|h_k[n]|^2} - P_k \right) =0.
\end{eqnarray}
Thus, from \eqref{derivative solution} and \eqref{complementary},  we obtain
\begin{eqnarray}
\!\!\theta^{\star}_k[n]     =   \Big(\frac{\eta[n]|h_k[n]|^2}{|h_k[n]|^2 + (\alpha_n +  \lambda)\eta^2[n]}\Big)^2 =  P_k|h_k[n]|^2.
\end{eqnarray}
Furthermore, when $\alpha_n > 0$, it is easily verified that 
\begin{eqnarray}
\Big(\frac{\eta[n]|h_k[n]|^2}{|h_k[n]|^2 + (\alpha_n +  \lambda)\eta^2[n]}\!\Big)^2 < \Big(\frac{\eta[n]|h_k[n]|^2}{|h_k[n]|^2 +   \lambda\eta^2[n]}\Big)^2 < \eta^2[n].
\end{eqnarray}
Therefore, when $\alpha_n > 0$ holds,  we have 
\begin{eqnarray}\label{geq}
 \theta^{\star}_k[n] =  P_k|h_k[n]|^2   <  \Big(\frac{\eta[n]|h_k[n]|^2}{|h_k[n]|^2 +  \lambda\eta^2[n]}\Big)^2 < \eta^2[n].
\end{eqnarray}
While if  $\alpha_n = 0$ holds, by substituting $\alpha_n = 0$ into \eqref{derivative solution}, we obtain that
\begin{eqnarray} \label{equal}
\theta^{\star}_k[n] =  
\left\{
\begin{aligned}
& \eta^2[n],
 \ \  \text{if} \ \  \lambda = 0   \\
&\Big(\frac{\eta[n]|h_k[n]|^2}{|h_k[n]|^2 +  \lambda\eta^2[n]}\Big)^2,   \text{otherwise}.
\end{aligned}
\right.
\end{eqnarray}
Furthermore, if $\alpha_n = 0$ holds, due to the primal feasibility condition, the peak power constraint of sensor $k$ at time slot $n$ must be satisfied with the following inequality at the optimality, 
\begin{eqnarray}\label{primal feasibility2}
\left(\frac{\theta^{\star}_k[n]}{|h_k[n]|^2} - P_k \right) \leq 0 \Leftrightarrow \theta^{\star}_k[n]\leq P_k|h_k[n]|^2.
\end{eqnarray}
To sum up, from \eqref{geq}, \eqref{equal}, and \eqref{primal feasibility2}, the optimal $\theta^{\star}_k[n]$ is given by
\begin{eqnarray} 
\theta^{\star}_k[n] = 
 \left\{
\begin{aligned}
& \min \big\{ \eta^2[n], P_k\big|h_{k}[n]\big|^2 \big\},
 \ \ \ \text{if} \ \  \lambda = 0   \\
&\min\Big\{\Big(\frac{\eta[n]|h_{k}[n]|^2}{|h_{k}[n]|^2 + \lambda^{\star}\eta^2[n]}\Big)^2, P_k|h_{k}[n]|^2 \Big\},   \text{otherwise},
\end{aligned}
\right.
\end{eqnarray}
where  $\lambda^{\star}$ is a constant that ensures the average power constraint $\sum_{n=1}^N {\theta^{\star}_k[n]}/{|h_k[n]|^2} =  N\bar{P}_k$ to satisfy the  complementary slackness condition when $\lambda >0$, i.e.,
\begin{eqnarray}
\lambda\left(\sum_{n=1}^{N}\frac{\theta_k[n]}{|h_k[n]|^2} - NP_k \right) =0.
\end{eqnarray}
Furthermore, if $\lambda = 0$ holds, due to the primal feasibility condition, the average power constraint of sensor $k$  must be satisfied with the following inequality at the optimality, 
\begin{eqnarray}\label{primal feasibility1}
\sum_{n=1}^{N}\!\frac{\theta^{\star}_k[n]}{|h_k[n]|^2} - N\bar{P}_k  \leq 0 \Leftrightarrow \sum_{n=1}^{N}\min\Big\{ \frac{\eta^2[n]}{\big|h_{k}[n]\big|^2}, P_k \Big\}\leq N\bar{P}_k.
\end{eqnarray}
In summary,  the optimal solution  of problem \eqref{k-th theta} is
\begin{eqnarray*} 
&&\theta^{\star}_k[n] = 
\left\{
\begin{aligned}
& \min\Big\{ \eta^2[n], P_k|h_{k}[n]|^2 \!\Big\}, 
\text{if} \  \min\Big\{ \frac{\eta^2[n]}{|h_{k}[n]|^2}, P_k \Big\}\leq N\bar{P}_k,  \\
&\min\Big\{ \Big(\frac{\eta[n]|h_{k}[n]|^2}{|h_{k}[n]\big|^2 +  \lambda^{\star}\eta^2[n]}\Big)^2, P_k|h_{k}[n]|^2 \Big\},   \text{otherwise}.
\end{aligned}
\right.
\end{eqnarray*}
This thus completes the proof.

\vspace{-4mm}
\subsection{Proof of Proposition \ref{convergence proposition}}\label{convergence proposition proof}
We denote  $f\big(\bm \eta, \bm \theta, \bm q \big)$  as the objective value of $\mathscr{P}$ for a feasible solution  $\big(\bm \eta, \bm \theta, \bm q \big)$.
As shown in step 7 of Algorithm \ref{algo1}, a feasible solution of problem $\mathscr{P}_{1.3}$ (i.e., $\big( \bm \eta^{i}, \bm \theta^{i}, \bm q^{i} \big)$) is also feasible to problem $\mathscr{P}_{1.1}$ and problem $\mathscr{P}_{1.2}$.
We denote $\big( \bm \eta^{i}, \bm \theta^{i}, \bm q^{i} \big)$ and $\big( \bm \eta^{i+1}, \bm \theta^{i+1}, \bm q^{i+1} \big)$ as a  feasible solution of   $\mathscr{P}$ at the $i$-th and $(i+1)$-th iterations, respectively. 

Since for given $\bm \theta^{i}, \bm q^{i}$ as  shown in step 5 of Algorithm \ref{algo1}, $\bm \eta^{i+1}$ is the optimal solution to problem $\mathscr{P}_{1.1}$, we have
\vspace{-2mm}
\begin{eqnarray} \label{p1}
f\big( \bm \eta^{i}, \bm \theta^{i}, \bm q^{i} \big) \geq f\big( \bm \eta^{i+1}, \bm \theta^{i}, \bm q^{i} \big). 	
\end{eqnarray}
Similarly, since for given $\bm \eta^{i+1}, \bm q^{i}$ as  shown in step 6 of Algorithm \ref{algo1}, $\bm \theta^{i+1}$ is the optimal solution to problem $\mathscr{P}_{1.2}$, it follows that
\vspace{-2mm}
\begin{eqnarray}\label{p2}
f\big( \bm \eta^{i+1}, \bm \theta^{i}, \bm q^{i} \big) \geq f\big( \bm \eta^{i+1}, \bm \theta^{i+1}, \bm q^{i} \big). 	
\end{eqnarray}
Besides, we have
\vspace{-2mm}
\begin{eqnarray}\label{p3}
f\big( \bm \eta^{i+1}, \bm \theta^{i+1}, \bm q^{i+1} \big) = f\big( \bm \eta^{i+1}, \bm \theta^{i+1}, \bm q^{i} \big). 	
\end{eqnarray}
This holds because the original objective function $f$ is independent of $\bm q$ but depends on $\bm \eta$ and $\bm \theta$.
Based on $\eqref{p1}$, $\eqref{p2}$, and $\eqref{p3}$, we further obtain 
\begin{eqnarray}
f\big( \bm \eta^{i+1}, \bm \theta^{i+1}, \bm q^{i+1} \big) \leq f\big( \bm \eta^{i}, \bm \theta^{i}, \bm q^{i} \big), 	
\end{eqnarray}
which shows that the objective value of problem $\mathscr{P}$ is always decreasing over iterations. Therefore, the proposed BCD-ADMM algorithm converges. This thus completes the proof.

\bibliographystyle{IEEEtran}
\bibliography{ref} 

\end{document}